\newcommand{\F}{\mathbb F}
\newcommand{\C}{\mathbb C}
\newcommand{\D}{\mathcal{D}}
\newcommand{\cE}{\mathcal{E}}
\newcommand{\cF}{\mathcal{F}}
\newcommand{\W}{\mathcal{W}}
\newcommand{\V}{\mathcal{V}}
\newcommand{\M}{\mathcal{M}}
\newcommand{\cN}{\mathcal{N}}
\newcommand{\cC}{\mathcal{C}}
\newcommand{\bv}{\boldsymbol{v}}
\newcommand{\bw}{\boldsymbol{w}}
\newcommand{\0}{\boldsymbol{0}}
\newcommand{\E}{\mathop{{\rm E}}}
\newcommand{\Tr}{\mathop{{\rm Tr}}}
\newcommand{\N}{\mathop{{\rm N}}}
\newcommand{\flr}[1]{\left \lfloor #1 \right \rfloor}
\def\dim{\mathop{{\rm dim}}}
\def\min{\mathop{{\rm min}}}
\def\u{{\mathbf{u}}}
\def\v{{\mathbf{v}}}
\def\wt{\mathop{{\rm wt}}}
\def\1{{\mathbf{1}}}
\journalname{Designs, Codes and Cryptography}
\begin{document}

\title{Xing-Ling Codes, Duals of their Subcodes,\\and Good Asymmetric Quantum Codes
\thanks{The work of S.~Jitman was partially supported by the National Research Foundation of Singapore under Research Grant NRF-CRP2-2007-03.\\ 
The Centre for Quantum Technologies is a Research Centre of Excellence funded by the Ministry of Education and the National Research Foundation of Singapore.\\
The authors' collaboration leading to this work was facilitated by travel grants provided by the Merlion Project No. 1.02.10.}
}
\author{Martianus Frederic Ezerman \and Somphong Jitman \and Patrick Sol\'{e}}
\authorrunning{Ezerman \and Jitman \and Sol\'{e}} % if too long for running head

\institute{Martianus Frederic Ezerman \at 
	Centre for Quantum Technologies (CQT), National University of Singapore,\\
	Block S15, 3 Science Drive 2, Singapore 117543.\\%, Republic of Singapore\\
    \email{frederic.ezerman@gmail.com, cqtmfe@nus.edu.sg}\\
           \and
        Somphong Jitman \at
	Department of Mathematics, Faculty of Science, Silpakorn University, Nakhonpathom 73000, Thailand.\\
	\emph{Former address: Division of Mathematical Sciences, School of Physical and Mathematical Sciences,\\
	Nanyang Technological University, 21 Nanyang Link, Singapore 637371.}\\
	\email{somphong@su.ac.th}\\
	    \and
	Patrick Sol\'{e} \at
	Telecom ParisTech, 46 rue Barrault, 75634 Paris Cedex 13, France, and\\
	Mathematics Department, King Abdulaziz University, Jeddah, Saudi Arabia.\\
	\email{sole@enst.fr}
}

\date{Received: date / Accepted: date}
% The correct dates will be entered by the editor

\maketitle

\begin{abstract}
	A class of powerful $q$-ary linear polynomial codes originally proposed by Xing and Ling is deployed to construct good 
	asymmetric quantum codes via the standard CSS construction. Our quantum codes are $q$-ary block codes that encode $k$ qudits 
	of quantum information into $n$ qudits and correct up to $\flr{(d_{x}-1)/2}$ bit-flip errors and up to
	$\flr{(d_{z}-1)/2}$ phase-flip errors.. In many cases where the length $(q^{2}-q)/2 \leq n \leq (q^{2}+q)/2$ 
	and the field size $q$ are fixed and for chosen values of $d_{x} \in \{2,3,4,5\}$ and $d_{z} \ge \delta$, where $\delta$ 
	is the designed distance of the Xing-Ling (XL) codes, the derived pure $q$-ary asymmetric quantum CSS codes possess the best 
	possible size given the current state of the art knowledge on the best classical linear block codes.
\keywords{Asymmetric quantum codes \and CSS codes \and Vandermonde matrix \and Xing-Ling codes}
% \PACS{PACS code1 \and PACS code2 \and more}
\subclass{81P45 \and 81P70 \and 94B05}
\end{abstract}

\section{Introduction}\label{intro}
The ability to perform quantum error-correction is essential in many quantum information processing tasks. In various scenarios involving 
qubit channels one can benefit from the presence of asymmetry in the respective probabilities of the bit-flip and the phase-flip errors. 

In the combined amplitude damping and dephasing channel investigated in~\cite{IM07} and~\cite{SKR09}, for example, the probabilities of 
the bit and the phase flips are related to the relaxation and the dephasing time, respectively, whose ratio can then be used to quantify 
the channel's asymmetry.

A scheme for fault-tolerant quantum computation that works effectively against highly biased noise, where dephasing 
is far stronger than all other types of noise, is presented in~\cite{AP08}. The accuracy threshold for quantum computation are shown to be 
improved by exploiting this noise asymmetry.

Quantum codes tailored to handle a particular ratio of asymmetry present in the channel are usually called asymmetric quantum codes 
(AQCs). Of the systematic construction methods for AQCs, the most widely used is the standard CSS construction, named after 
Calderbank, Shor, and Steane, that links a pair of nested classical linear $q$-ary codes to a $q$-ary AQC whose parameters can 
be directly deduced from the parameters of the corresponding classical code pair. More background materials on the theoretical 
derivations of the basic facts can be found in~\cite{WFLX09} and the references cited therein. 

Let $C_{i}$ for $i \in \{1,2\}$ be a $q$-ary linear code with length $n$, dimension $k_{i}$ and minimum distance $d_{i}$. 
Let $C_{i}^{\perp}$ be the Euclidean dual of $C_{i}$. % with $d_{i}^{\perp}$ denoting the minimum distance of $C_{i}^{\perp}$. 
To design a standard CSS AQC with good parameters the following three requirements need to be satisfied. First, we require that 
$C_{1}^{\perp} \subset C_{2}$. Second, for fixed values of $\left(q,n,d_{2}\right)$, the dimension $k_{2}$ must be as large as possible. 
Third, $k_{1}$ must also be as large as possible for specified values of $\left(q,n,d_{1}\right)$. Since the Euclidean inner product 
is non-degenerate, the third condition implies that the codimension of $C_{1}^{\perp}$ in $C_{2}$ should be as large as possible.

Since the requirements are well-understood, many families of nested classical codes have been recognized as natural choices in the construction. 
Prior works have made use of cyclic codes and their subfamilies such as the BCH and the quadratic residue (QR) codes. Other families that have 
been investigated include the low-density parity-check (LDPC) codes, the Reed-Muller codes, the Reed-Solomon codes and their generalization, 
the character codes, the affine-invariant and the product codes. Classical propagation methods have also been applied to construct AQCs of higher 
lengths based on already constructed ones. The tables provided in~\cite{Gua13} and the references therein provide a summary of previously 
constructed families of AQCs. Note that most of the results in the MDS family presented in~\cite[Table 2]{Gua13} were mistakenly attributed 
to~\cite{WFLX09} instead of to the correct source~\cite{EJKL11}.

This paper derives $q$-ary CSS AQCs with good parameters based on nested polynomial codes first introduced by Xing and Ling in~\cite{XL2} and, 
henceforth, called XL codes. These codes fit nicely into the standard CSS framework since they generally have good parameters for their 
range of lengths $(q^{2}-q)/2 \leq n \leq (q^{2}+2)/2$ and their nestedness is self-evident. However, since the dual of an XL code is not 
necessarily an XL code, the challenge is to understand the structure of the duals of some carefully chosen subcodes of large codimension 
of the XL codes. This, to the best of our knowledge, had not been considered before.

Our investigation leads to a more complete picture in the studies of AQCs based on XL codes. For $d_{x} \in \{2,3,4\}$ the exact parameters 
of the resulting AQCs are explicitly determined. For $d_{x}=5$ we have enough information to set a good lower bound on the distances while all 
other parameters can be easily derived from the properties of the classical pair. For prime power $q \leq 9$, the data on currently best-known $q$-ary 
linear block codes in Grassl's online tables~\cite{Gr09} can be efficiently used to provide a performance benchmark as a measure of optimality.

After this introduction, Section~\ref{sec:prelims} reviews important definitions and establishes a certificate of optimality based on the state 
of the art knowledge on classical codes. A summary of the construction and parameters of the XL codes is provided in Section~\ref{sec:XL} where 
particular attention is given to subcodes of XL codes with dual distances in the set $\{2,3,4,5\}$. The parameters of the resulting pure $q$-ary AQCs are 
then computed explicitly for $q \in \{3,4,5,7,8,9\}$ in Section~\ref{sec:AQC}. In many instances they can be certified to be optimal or best-known 
based on Theorem~\ref{th:opt} given in Section~\ref{sec:prelims}. A summary and several open directions form the last section.

All computations in this work are done using MAGMA~\cite{BCP97} Version 2.19-3.

\section{Preliminaries}\label{sec:prelims}
Let $q$ be a prime power and ${\F}_{q}$ be the finite field having 
$q$ elements. A {\it linear $[n,k,d]_q$-code} $C$ is a $k$-dimensional ${\F}_{q}$-subspace of ${\F}_{q}^n$ with 
{\em minimum distance} $d:=\min \{ \wt(\v) | \v \in C \setminus \{\0 \}\}$, where $\wt(\v)$ denotes the 
{\it Hamming weight}  of $\v \in {\F}_{q}^{n}$. Given two distinct linear codes $C$ and $D$, $\wt(C \setminus D)$ denotes 
$\min \{ \wt(\u) | \u\in C \setminus D \}$. Given $q,n$, and $d$, let $B_{q}(n,d)$ denote 
$\max \{q^{k} | \text{ there exists an } [n,k,d]_{q}\text{-code}\}$.

When discussing a specific code $C$, we use $d(C)$ and $\dim(C)$ to denote its minimum distance and dimension as an $\F_{q}$-subspace, respectively.

For $\u=(u_i)_{i=1}^n$, $\v=(v_i)_{i=1}^n \in \F_{q}^{n}$, their \textit{Euclidean inner product} 
is given by $(\u,\v)_{\E}:=\sum_{i=1}^{n} u_{i} \cdot v_{i}$. With respect to this inner product, the {\it dual} $C^{\perp}$ of $C$ is given by 
\begin{equation*}
C^\perp:=\left\lbrace \u \in {\F}_q^n | (\u,\v )_{\E} = 0 \text{ for all } \v \in C \right\rbrace.
\end{equation*}

Let $d_{x}$ and $d_{z}$ be positive integers. A quantum code $Q$ in 
$V_{n}=({\C}^{q})^{\otimes n}$ with dimension $K \geq 1$ is called an 
\textit{asymmetric quantum code} with parameters 
$((n,K,\{ d_{z} , d_{x} \}))_{q}$, or $[[n,k,\{ d_{z} , d_{x} \}]]_{q}$ with $k=\log_{q} K$ whenever $Q$ is a stabilizer code, 
if $Q$ is able to detect any combination of up to $d_{x}-1$ bit-flips (or $X$-errors) and up to $d_{z}-1$ phase-flips 
(or $Z$-errors) simultaneously. 
 
The standard CSS construction is given in {\it e.g.}~\cite{AA10,WFLX09}.
\begin{theorem}\label{css}
Let $C_{i}$ be linear codes with parameters $[n,k_{i},d_{i}]_{q}$ for $i \in \{1,2\}$ 
with $C_{1}^{\perp}\subseteq C_{2}$. Let 
\begin{equation}\label{eq:distances}
d_{z}:=\wt(C_{2} \setminus C_{1}^{\perp}) \text{ and } d_{x}:=\wt(C_{1} \setminus C_{2}^{\perp}) \text{.}
\end{equation}
Then there exists an AQC $Q$ with parameters $[[n,k_{1}+k_{2}-n,\{ d_{z} , d_{x} \}]]_{q}$. The code $Q$ is said to be 
\textit{pure} whenever $d_{z}=d_{2}$ and $d_{x}=d_{1}$.
\end{theorem}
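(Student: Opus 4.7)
The plan is to realize $Q$ as a $q$-ary stabilizer code built from the nested pair. Let $H_{i}$ be a parity check matrix for $C_{i}$, so that its rows span $C_{i}^{\perp}$. I would take the $Z$-type stabilizer generators to correspond to a basis of $C_{2}^{\perp}$ and the $X$-type generators to correspond to a basis of $C_{1}^{\perp}$. The total number of independent generators is $(n-k_{2})+(n-k_{1})=2n-k_{1}-k_{2}$, so the $q$-ary stabilizer formalism yields a protected subspace of dimension $q^{n-(2n-k_{1}-k_{2})}=q^{k_{1}+k_{2}-n}$, matching the claimed $k=k_{1}+k_{2}-n$.

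The next verification is that the stabilizer group is abelian. Since $X$-type generators pairwise commute and $Z$-type generators pairwise commute, the only substantive condition is that every $X$-generator commute with every $Z$-generator. Working inside the generalized Pauli group over $\F_{q}$, this commutation reduces to the Euclidean orthogonality of the corresponding vectors, i.e.\ every element of $C_{1}^{\perp}$ must be orthogonal to every element of $C_{2}^{\perp}$. This follows from the hypothesis $C_{1}^{\perp}\subseteq C_{2}$, since taking duals yields $C_{2}^{\perp}\subseteq (C_{1}^{\perp})^{\perp}=C_{1}$, and any vector of $C_{2}^{\perp}\subseteq C_{1}$ is automatically orthogonal to every element of $C_{1}^{\perp}$.

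For the distance claim, I would analyze bit-flip and phase-flip errors separately. A pure $X$-error pattern $\mathbf{e}$ acts trivially on the code space exactly when $\mathbf{e}\in C_{1}^{\perp}$, and it is undetectable when it commutes with every stabilizer, which happens precisely when $\mathbf{e}\in (C_{2}^{\perp})^{\perp}=C_{2}$. Hence any nontrivial undetectable $X$-error lies in $C_{2}\setminus C_{1}^{\perp}$ and has weight at least $d_{z}=\wt(C_{2}\setminus C_{1}^{\perp})$. A symmetric argument on $Z$-errors gives $d_{x}=\wt(C_{1}\setminus C_{2}^{\perp})$, so $Q$ detects every combination of up to $d_{x}-1$ bit-flips and up to $d_{z}-1$ phase-flips. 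Purity follows at once: if $d_{z}=d_{2}$ then the minimum-weight codewords of $C_{2}$ themselves achieve $\wt(C_{2}\setminus C_{1}^{\perp})$, meaning no shorter error is hidden inside the stabilizer; the case $d_{x}=d_{1}$ is analogous.

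The main technical delicacy is that for $q$ a genuine prime power rather than a prime, the commutation inside the generalized Pauli group is governed by a trace-based symplectic form on $\F_{q}^{2n}$, and one must check that the CSS-type splitting of generators into pure-$X$ and pure-$Z$ families collapses this symplectic condition down to the ordinary Euclidean inner product on $\F_{q}^{n}$. Once that reduction is established, the remainder of the proof is bookkeeping and essentially follows the derivations spelled out in~\cite{AA10,WFLX09}.
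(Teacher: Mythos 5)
First, note that the paper does not actually prove Theorem~\ref{css}; it is quoted as the standard CSS construction from~\cite{AA10,WFLX09}, so your argument has to stand on its own. Its architecture is the standard and essentially correct one: $X$-type and $Z$-type generators drawn from the two dual codes, the count of $2n-k_{1}-k_{2}$ independent generators giving a protected subspace of dimension $q^{k_{1}+k_{2}-n}$, commutativity reduced via $C_{1}^{\perp}\subseteq C_{2}\Rightarrow C_{2}^{\perp}\subseteq C_{1}$ to Euclidean orthogonality, and undetectable-but-nontrivial pure errors identified with nonzero cosets. Your closing caveat about the trace-symplectic form for non-prime $q$ is exactly the technical point that needs care.

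There is, however, an internal inconsistency in the distance paragraph. With your assignment ($X$-generators from $C_{1}^{\perp}$, $Z$-generators from $C_{2}^{\perp}$) you correctly derive that a nontrivial undetectable $X$-error, i.e.\ a bit-flip, lies in $C_{2}\setminus C_{1}^{\perp}$ and hence has weight at least $\wt(C_{2}\setminus C_{1}^{\perp})=d_{z}$, and symmetrically that an undetectable $Z$-error has weight at least $d_{x}$. You then conclude that $Q$ detects up to $d_{x}-1$ bit-flips and up to $d_{z}-1$ phase-flips, which is the opposite of what you just showed: in the paper's convention $d_{x}$ governs bit-flips and $d_{z}$ governs phase-flips, so the code you constructed is an $[[n,k,\{d_{z},d_{x}\}]]_{q}$ code with the two distances interchanged relative to your derivation. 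The fix is simply to swap the roles of the generators --- take the $X$-type generators from $C_{2}^{\perp}$ and the $Z$-type generators from $C_{1}^{\perp}$ --- so that undetectable bit-flips lie in $C_{1}\setminus C_{2}^{\perp}$ (weight at least $d_{x}$) and undetectable phase-flips in $C_{2}\setminus C_{1}^{\perp}$ (weight at least $d_{z}$), matching the statement. Remark~\ref{rem:switch} makes the same interchangeability point via a Fourier transform, but as written your derivation and your conclusion do not agree, so the relabelling must be done explicitly. A minor further remark: purity in Theorem~\ref{css} is a definition rather than a claim, so the final sentences verifying it are not needed.
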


\begin{remark}\label{rem:switch}
All CSS codes are stabilizer codes. In the literature it is customary to assume $d_z \ge d_x$ since in general 
the dephasing errors occur with higher probability than the bit-flip errors do. In this paper we opt not to order 
the distances to better present how their computational values are derived. Whenever necessary, one can apply a Fourier 
transformation over $\F_q$ to interchange the role of the bit-flip and the phase-flip error operators. 
That way, $d_z \ge d_x$ can be obtained.
\end{remark}

The purity in Theorem~\ref{css} is equivalent to the general definition given in~\cite[Th. 3.1 Part (ii)]{WFLX09}. 
A certificate of optimality for pure $q$-ary CSS AQCs can be based on the following result.

\begin{theorem}\label{th:opt}
If there exist a pure standard CSS $[[n,k,\{ d_{z} , d_{x} \}]]_q$ code $Q$, then 
\begin{equation}\label{bound}
k\leq \log_{q}(B_q(n,d_x))+\log_{q}(B_q(n,d_z))-n.
\end{equation}
\end{theorem}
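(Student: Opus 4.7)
The plan is to unpack the definition of purity and then apply the classical definition of $B_q(n,d)$ to each of the two component codes in the CSS pair. Since the theorem posits a \emph{pure standard CSS} code, I would first invoke Theorem~\ref{css} to obtain a pair of linear codes $C_1, C_2$ with parameters $[n, k_1, d_1]_q$ and $[n, k_2, d_2]_q$ respectively, satisfying $C_1^\perp \subseteq C_2$, and with $k = k_1 + k_2 - n$. The purity hypothesis, as stated after equation~(\ref{eq:distances}), gives $d_x = d_1$ and $d_z = d_2$, so the minimum distances of the classical constituent codes coincide with the quantum distance parameters.

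Next, I would apply the definition of $B_q(n,d)$ as the maximal size $q^k$ of a classical $q$-ary linear $[n,k,d]_q$ code to each constituent. Because $C_1$ is an $[n, k_1, d_x]_q$ code, by definition we have $q^{k_1} \leq B_q(n, d_x)$, which yields $k_1 \leq \log_q B_q(n, d_x)$. The same reasoning applied to $C_2$ produces $k_2 \leq \log_q B_q(n, d_z)$. Adding these two inequalities and substituting into $k = k_1 + k_2 - n$ gives the bound~(\ref{bound}).

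There is essentially no obstacle here: the bound is a one-line consequence of the CSS construction once purity is correctly interpreted. The only subtle point worth flagging in the write-up is that without the purity assumption the quantum distances $d_x, d_z$ defined in~(\ref{eq:distances}) may strictly exceed the classical minimum distances $d_1, d_2$ (because weights are measured on coset representatives), in which case replacing $d_i$ by $d_x, d_z$ inside $B_q(n, \cdot)$ would no longer be justified. Stating this explicitly clarifies why the result serves as a meaningful certificate of optimality exclusively in the pure regime addressed by the construction in Section~\ref{sec:AQC}.
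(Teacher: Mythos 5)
Your proposal is correct and follows essentially the same route as the paper's own proof: unpack purity to identify $d(C_1)=d_x$ and $d(C_2)=d_z$, bound $\dim(C_i)$ by $\log_q B_q(n,\cdot)$, and substitute into $k=\dim(C_2)-(n-\dim(C_1))$. Your closing remark on why purity is essential is a sensible clarification but does not change the argument.
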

\begin{proof}
Assume there exists a  pure CSS $[[n,k,\{ d_{z} , d_{x} \}]]_q$ code. Then, equivalently, there exist $q$-ary linear 
codes $C_1$ and $C_2$ such that $d(C_1)=d_x$,  $d(C_2)=d_z$, $C_1^\perp \subset C_2$ and $k=\dim(C_2)-\dim(C_1^\perp)$. Since 
$\dim(C_1)\leq \log_{q}(B_q(n,d_x))$ and $\dim(C_2)\leq \log_{q}(B_q(n,d_z))$, 
\begin{align*}
	k&=\dim(C_2)-\dim(C_1^\perp) = \dim(C_2)-(n-\dim(C_1))\\
	 & \leq \log_{q}(B_q(n,d_x))+\log_{q}(B_q(n,d_z))-n.
\end{align*}
\end{proof}

The bound~(\ref{bound}) holds true for pure CSS AQCs. Fixing $n$ and $d_{i}$ for $i \in \{1,2\}$, if both $k_{1}$ and $k_{2}$ 
are optimal, then the cardinality of $Q$ is optimal among CSS AQCs of equal parameter set $(q,n,d_{z},d_{x})$. If both $C_1$ 
and $C_2$ have the same dimension as the currently best-known linear codes listed in~\cite{Gr09}, then $Q$ has the currently 
best-known cardinality among comparable CSS AQCs. Any improvement on the lower bound of $B_{q}(n,d_{i})$ potentially leads to an 
improved quantum code and there would not be any improvement on the parameters of an AQC if there are no improvements on the 
lower bound of the corresponding $B_{q} (n, d_{i})$.

To end this section, we recall two mappings from $\F_{q^{2}}$ onto $\F_{q}$ which will be used extensively in what follows. 
The trace mapping $\Tr$ sends $\gamma$ to $\gamma + \gamma^{q}$ while the norm mapping $\N$ outputs $\gamma^{q+1}$ on input $\gamma$. 
Properties and important results concerning these two mappings in the more general setup of $\F_{q^{m}}$ for positive integer $m$ are 
discussed in details in~\cite[Ch. 2 Sect. 3]{LN97}.

\section{Suitably Chosen Nested XL Codes}\label{sec:XL}
This section is presented in two parts. In the first subsection we recall the construction of XL codes and their parameters. 
In the second subsection, we construct nested XL codes of the right parameters to use in the CSS construction.

\subsection{Construction of XL Codes}\label{subsec:construct}
In this subsection, we recall the construction of the XL codes given in~\cite{XL2}.
For a finite field $\F_{q}$, let $\F_{q^2}$ be its quadratic extension. Let $\{\alpha_1,\alpha_2,\dots,\alpha_q\}$ 
be a fixed list of the elements in $\F_{q}$. Without loss of generality, let us assume that $\F_{q^2}$ is listed as 
\begin{equation}\label{eq:choice}
\{\alpha_{1},\alpha_{2},\dots,\alpha_{q},\beta_{1},\beta_{1}^{q},\beta_{2},\beta_{2}^{q},\dots,\beta_{r},\beta_{r}^{q}\} \text{, where }
r=(q^2-q)/2\text{.}
\end{equation}

Define $V_{1,0}$ to be the $\F_{q}$-vector space generated by the polynomial~$1$. For $2\leq m\leq q-1$ and $0\leq \ell \leq m-1$, 
let
\begin{equation*}
V_{m,\ell}:=\langle \{e_{i,j}(x) | 0\leq i\leq j\leq m-2\} \cup \{e_{i,m-1}(x)| 0\leq i\leq \ell\}\rangle \text{,}
\end{equation*}
where
\begin{equation}
e_{i,j}(x)=
\begin{cases}\label{basis}
x^{iq+j}+x^{jq+i} & \text{ if } i\ne j \text{,}\\
x^{iq+j}& \text{ if } i= j \text{,}
\end{cases}
\end{equation}
for all $i,j\geq 0$.

For easy reference, we explicitly list $e_{i,j}(x)$ for $0 \leq i \leq j \leq 4$ down in Figure 1.
\begin{figure*}[!h]
\normalsize
\centering
\begin{tabular}{c c c c c c c}
    \hline\noalign{\smallskip}
    \multicolumn{2}{c}{\multirow{2}{*}{$e_{i,j}(x)$}}&\multicolumn{5}{c}{$j$}\\
%	\cline{3-7}    
	\multicolumn{2}{c}{} & $0$ & $1$ & $2$ & $3$ & $4$\\
	\noalign{\smallskip}
	\hline\noalign{\smallskip}
    \multirow{5}{*}{$i$} & $0$ & $1$ & $x^{q}+x$ & $x^{2q} + x^{2}$     & $x^{3q} + x^{3}$      & $x^{4q} + x^{4}$ \\
	\noalign{\smallskip}
                         & $1$ &     & $x^{q+1}$ & $x^{2q+1} + x^{q+2}$ & $x^{3q+1} + x^{q+3}$  & $x^{4q+1} + x^{q+4}$ \\
	\noalign{\smallskip}
                         & $2$ &     &           & $x^{2q+2}$           & $x^{3q+2} + x^{2q+3}$ & $x^{4q+2} + x^{2q+4}$ \\
	\noalign{\smallskip}
                         & $3$ &     &           &                      & $x^{3q+3}$            & $x^{4q+3} + x^{3q+4}$ \\
	\noalign{\smallskip}
                         & $4$ &     &           &                      &                       & $x^{4q+4}$ \\
    \noalign{\smallskip}\hline
	\noalign{\smallskip}
    \multicolumn{7}{c}{\multirow{2}{*}{Figure 1: List of $e_{i,j}(x)$ for $0 \leq i \leq j \leq 4$}}
\end{tabular}
%\caption{List of $e_{i,j}(x)$ for $0 \leq i \leq j \leq 4$}
\label{fig:one}
\end{figure*}

Given that $0\leq t\leq q$, $2\leq m\leq q-1$ and $0\leq \ell \leq m-1$, the $q$-ary linear code $C_q(t,m,\ell)$
is defined as the evaluation code of $\{f(x) \in V_{m,\ell}\}$ on $(\alpha_{1},\dots,\alpha_{t},\beta_{1},\dots,\beta_{r})$. 
Explicitly,
\begin{equation}\label{eq:CodeC}
C_q(t,m,\ell):= \{(f(\alpha_{1}),\dots,f(\alpha_{t}),f(\beta_{1}),\dots,f(\beta_{r}))\mid f(x)\in V_{m,\ell}\}\text{.}
\end{equation}
When $t$ is set to be $0$, none of the elements of $\F_{q}$ is chosen in the evaluation. For $q \leq 9$, Table~\ref{table:albe} lists down our 
choices of $\alpha_{i}$ for $1 \le i \le q$ and $\beta_{j}$ for $1 \le j \le r$.
\begin{table}[!hbt]
\caption{Actual Choices for $\alpha_{i}$ and $\beta_{j}$ used in Computation}
\label{table:albe}
\centering
\begin{tabular}{c l l l }
		\hline\noalign{\smallskip}
		$q$ &   $ a \in \F_{q^2}$ root of & $\F_{q}=\{\alpha_1,\dots,\alpha_q\}$ & $\{\beta_1,\beta_2,\dots,\beta_r\} \subset \F_{q^{2}}$ \\
		\noalign{\smallskip}\hline\noalign{\smallskip}
		$3$ & $x^2 + 2x + 2$ & $ \{1, 2, 0 \}$ 			 & $\{a, a^2, a^5\}$ \\
		\noalign{\smallskip}
		$4$ & $x^4 + x + 1$  & $\{1, 0, a^5, a^{10}\}$	 & $\{a, a^2, a^3, a^6, a^7,a^{11} \}$\\
		\noalign{\smallskip}
		$5$ & $x^2 + 4x + 2$ & $\{1, 4, 0, 2, 3\}$		 & $\{a, a^2, a^3, a^4, a^7, a^8, a^9, a^{13}, a^{14}, a^{19}\}$\\
		\noalign{\smallskip}
		$7$ & $x^2 + 6x + 3$ & $\{1, 6, 0, 2, 5, 3, 4\}$ & $\{a, a^2, a^3, a^4, a^5, a^6, a^9, a^{10}, a^{11}, a^{12}, a^{13}, a^{17},$ \\
		    &                &                           & $\enskip a^{18}, a^{19}, a^{20},a^{25}, a^{26}, a^{27}, a^{33}, a^{34}, a^{41}\}$ \\
		\noalign{\smallskip}
		$8$ & $x^6 + x^4 + x^3$ & $\{1, a^{45}, a^{36}, a^{27},$ &
		      $\{a, a^2, a^3, a^4, a^5, a^6, a^7, a^{10}, a^{11}, a^{12}, a^{13}, a^{14}, a^{15}, a^{19}, a^{20},$\\
		    & $+ x + 1$  &$\enskip a^{18}, 0, a^9, a^{54}\}$ & $\enskip a^{21}, a^{22}, a^{23}, a^{28}, a^{29}, a^{30}, a^{31},  a^{37}, a^{38}, a^{39}, a^{46}, a^{47}, a^{55}\}$ \\
		\noalign{\smallskip}
		$9$ & $x^4 + 2x^3 + 2$ & $\{1, 0, a^{70}, a^{60}, a^{50},$
		    & $\{a, a^2, a^3, a^4, a^5, a^6, a^7, a^8, a^{11}, a^{12}, a^{13}, a^{14}, a^{15}, a^{16}, a^{17},$\\
 &&$\enskip 2, a^{30}, a^{20}, a^{10}\}$ & $\enskip  a^{21},a^{22}, a^{23}, a^{24}, a^{25}, a^{26}, a^{31}, a^{32}, a^{33}, a^{34}, a^{35}, a^{41}, a^{42},$\\ 
 &&                                      & $\enskip a^{43}, a^{44},a^{51}, a^{52}, a^{53}, a^{61}, a^{62}, a^{71} \}$\\
		\noalign{\smallskip}\hline
\end{tabular}
\end{table}

\begin{theorem}[{\cite{XL2}}]\label{lx}
Let $0\leq t\leq q$, $2\leq m\leq q-1$ and $0\leq \ell \leq m-1$. Let $h = \binom{m}{2}$, $r=(q^{2}-q)/2$, and 
\[
g=
\begin{cases}
\min\{\max\{2(m-2),m+\ell-1\},t\} & \text{ if } q \text{ is odd,}\\
\max\{\min\{m-2,t\},2t-q\} & \text{ if } q \text{ is even and } \ell \leq m-2 \text{,}\\
\max\{\min\{m-1,t\},2t-q\} & \text{ if } q \text{ is even and } \ell = m-1 \text{.}
\end{cases}
\]
Then $C_q(t,m,\ell)$ as defined in (\ref{eq:CodeC}) is an $[n,k,d]_q$-code 
with $n=t+r$, $k= h+\ell+1$, and 
\begin{equation}\label{ddd}
d\geq \delta := n-\frac{1}{2}(q(m-1)+\ell + g)\text{.}
\end{equation}
\end{theorem}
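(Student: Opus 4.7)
The plan is to verify the three claims separately, with the minimum distance bound doing all the real work. The length $n = t + r$ is immediate from the number of evaluation points. For the dimension, the generators $e_{i,j}$ of $V_{m,\ell}$ involve pairwise disjoint monomials and so are $\F_q$-linearly independent in $\F_{q^2}[x]$, which gives $\dim_{\F_q} V_{m,\ell} = h + \ell + 1$; once the bound $\delta \ge 1$ is in hand, the evaluation map $V_{m,\ell} \to \F_q^n$ is automatically injective and $k = h + \ell + 1$ follows.

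The proof of $d \ge \delta$ rests on two structural properties of $e_{i,j}(\gamma)$ for $\gamma \in \F_{q^2}$, both following from $\gamma^{q^2} = \gamma$: first, $e_{i,j}(\gamma)^q = e_{i,j}(\gamma)$, so $e_{i,j}(\gamma) \in \F_q$ and the code is genuinely $q$-ary; second, $e_{i,j}(\beta^q) = e_{i,j}(\beta)$, so the zeros of any $f \in V_{m,\ell}$ in $\F_{q^2} \setminus \F_q$ come in conjugate pairs $\{\beta, \beta^q\}$. Combined with the elementary degree bound $\deg f \le q(m-1) + \ell$ (coming from the leading monomial $x^{(m-1)q + \ell}$ in $e_{\ell, m-1}$), these yield the following core counting step. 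For a nonzero $f \in V_{m,\ell}$, let $Z_1$ be the number of zeros in $\F_q$ and $s'$ the number of conjugate pairs of zeros in $\F_{q^2} \setminus \F_q$, so $Z_1 + 2s' \le \deg f$. Writing $t' \le Z_1$ and $s \le s'$ for the zeros actually hit among $\{\alpha_1, \ldots, \alpha_t\}$ and $\{\beta_1, \ldots, \beta_r\}$, a short calculation gives
\[
t' + s \;\le\; t' + \frac{\deg f - t'}{2} \;\le\; \frac{q(m-1) + \ell + t'}{2},
\]
so the weight $n - (t' + s)$ is at least $\delta$ as soon as a uniform upper bound $t' \le g$ is available.

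The hard part is extracting this bound on $t'$, and it is precisely what forces the case split defining $g$. Using $\alpha^q = \alpha$ for $\alpha \in \F_q$, the restriction becomes $f|_{\F_q}(\alpha) = \sum_{i < j} 2 c_{i,j} \alpha^{i+j} + \sum_i c_{i,i} \alpha^{2i}$. For $q$ odd this is a general polynomial of degree at most $\max\{2(m-2), m + \ell - 1\}$, so $t' \le \min\{\max\{2(m-2), m + \ell - 1\}, t\}$, matching the first formula for $g$. For $q$ even, characteristic $2$ annihilates the off-diagonal contribution, leaving $\sum_i c_{i,i} \alpha^{2i}$, the square of a polynomial of degree at most $m - 2$ (or $m - 1$ when $\ell = m - 1$). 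In the generic sub-case (some $c_{i,i} \ne 0$) this yields $t' \le \min\{m-2, t\}$ (respectively $\min\{m-1, t\}$ when $\ell = m-1$). The delicate sub-case, and the main obstacle, arises when every $c_{i,i}$ vanishes while $f$ itself does not: then $f|_{\F_q} \equiv 0$, forcing $Z_1 = q$ and $t' = t$, and one must rerun the counting with $2 s' \le \deg f - q \le q(m-2) + \ell$. This rearranges precisely to $t' + s \le (q(m-1) + \ell + (2t - q))/2$, which is exactly why the term $2t - q$ enters $g$. Taking the max of the two sub-case bounds reproduces $g = \max\{\min\{m-2, t\}, 2t - q\}$ when $\ell \le m - 2$, and the analogous expression when $\ell = m - 1$.
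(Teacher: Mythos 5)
The paper does not actually prove this theorem---it is imported verbatim from Xing--Ling \cite{XL2}---so your reconstruction can only be judged on its own terms. Your overall strategy (conjugate pairs of zeros counted twice, the degree bound $\deg f\le q(m-1)+\ell$, and a separate bound on the number of rational zeros $t'$ obtained from the restriction of $f$ to $\F_q$) is the right one. But there is a genuine gap in the odd-$q$ case. You bound $t'$ by the degree of $P(x):=\sum_{i<j}2c_{i,j}x^{i+j}+\sum_i c_{i,i}x^{2i}$, which presupposes that $P$ is a \emph{nonzero} polynomial. Because the exponents $i+j$ collide across distinct basis elements, $P$ can vanish identically while $f\neq 0$: for instance $f=e_{0,2}-2e_{1,1}=x^{2q}-2x^{q+1}+x^{2}=(x^{q}-x)^{2}\in V_{3,0}$ has $P=2x^{2}-2x^{2}\equiv 0$. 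For such $f$ one gets $t'=t$, which exceeds $g=\min\{\max\{2(m-2),m+\ell-1\},t\}$ whenever $t>\max\{2(m-2),m+\ell-1\}$ (e.g.\ $q=t=7$, $m=3$, $\ell=0$ gives $t'=7>2=g$), and your inequality chain then yields only $\mathrm{wt}\ge n-\tfrac12(q(m-1)+\ell+t)$, which is weaker than $\delta$. You treat exactly this degenerate sub-case for even $q$ (where it produces the $2t-q$ term) but omit it for odd $q$, where the stated $g$ contains no such term and hence the fallback $t'\le t$ does not suffice.

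The gap is repairable, but it needs an extra idea rather than a routine check. Write $f(x)=G(x,x^{q})$ with $G(x,y)=\sum_{i<j}c_{i,j}(x^{i}y^{j}+x^{j}y^{i})+\sum_i c_{i,i}x^{i}y^{i}$ symmetric, so that $P(x)=G(x,x)$. If $P\equiv 0$ as a polynomial then $(x-y)\mid G$, and since $G$ is symmetric and the characteristic is odd the quotient is antisymmetric, whence $(x-y)^{2}\mid G$ and therefore $(x^{q}-x)^{2}\mid f$. Rerunning your count with $2s\le\deg f-2q$ gives $t'+s\le t+\tfrac12(q(m-1)+\ell-2q)=\tfrac12\bigl(q(m-1)+\ell+2(t-q)\bigr)\le\tfrac12\bigl(q(m-1)+\ell+g\bigr)$ since $t\le q$ and $g\ge 0$. (The remaining odd-$q$ possibility, $P\not\equiv 0$ but vanishing on all of $\F_q$, forces $\deg P\ge q\ge t$ and hence $g=t\ge t'$, so it is harmless.) With this case added your argument is complete; the length, dimension, and even-characteristic portions are all correct as written.
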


Note that $V_{m,i}\subset V_{s,j}$ for all $m<s$ or for $i < j$ when $m=s$. Hence, %for all $m<s$ or for $i < j$ when $m=s$, 
$C_q(t,m,i) \subset C_q(t,s,j)$ and
\begin{equation}\label{codim}
\dim(C_q(t,s,j))-\dim(C_q(t,m,i))=\binom{s}{2} -\binom{m}{2} +j - i \text{.}
\end{equation}

\subsection{Suitable Nested XL Code Pairs}\label{sebsec:3.2}
We begin with an easy result to eventually help us construct good AQCs with $d_{x}=2$.
\begin{proposition}\label{1a}
For $0 \leq t \leq q$, The code $C_{q}(t,1,0)^\perp$ is a $[t+r,t+r-1,2]_q$-MDS code.
\end{proposition}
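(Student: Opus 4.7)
The plan is essentially bookkeeping from the definitions. First I would unpack $V_{1,0}$: by the definition given just before~(\ref{basis}), it is the one-dimensional $\F_q$-space $\{c \cdot 1 : c \in \F_q\}$ consisting of constant polynomials. Substituting this into the evaluation~(\ref{eq:CodeC}), the code $C_q(t,1,0)$ is
\[
\{(c,c,\dots,c) \in \F_q^{t+r} : c \in \F_q\},
\]
that is, the repetition code of length $n = t+r$.

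Next I would identify the dual. The Euclidean dual of the repetition code is the parity-check (zero-sum) code
\[
\left\{(u_1,\dots,u_n) \in \F_q^n : \sum_{i=1}^{n} u_i = 0\right\},
\]
which has dimension $n-1 = t+r-1$. To read off its minimum distance, I would note on the one hand that the vector $(1,-1,0,\dots,0)$ lies in this code and has weight $2$, and on the other hand that no nonzero codeword can have weight $1$ since a single nonzero coordinate cannot sum to zero. Hence the minimum distance is exactly $2$.

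Finally, MDS-ness is automatic from the Singleton bound: with $k = n-1$ one has $d \le n - k + 1 = 2$, and this is met with equality. There is no real obstacle here; the only thing to be careful about is confirming $n = t + r \ge 2$ so that the parity-check code is nontrivial, which is fine whenever $t + r \ge 2$ (guaranteed since $r = (q^2-q)/2 \ge 1$ for $q \ge 2$).
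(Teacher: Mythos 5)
Your proof is correct and takes essentially the same route as the paper: identify $C_q(t,1,0)$ as the $[t+r,1,t+r]_q$ repetition code from $V_{1,0}=\langle 1\rangle$, and read off the dual's parameters. You simply spell out the standard dual-of-repetition-code details that the paper leaves implicit.
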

\begin{proof}
This follows immediately since $V_{1,0} = \langle 1 \rangle$ implies that 
$C_q(t,1,0)$ is the $[t+r,1,t+r]_{q}$-repetition code.
\end{proof}

The following is a useful tool in the sequel.
\begin{lemma}\label{a0} Select $s$ distinct elements $a_{1},a_{2},\dots,a_{s} \in 
S:=\{\alpha_1,\alpha_2,\dots,\alpha_t,\beta_1,\beta_2,\dots,\beta_r\}$ as defined in (\ref{eq:choice}). 
If $a_{i}^{q}+a_{i}=a_{j}^{q}+a_{j}$ for all $1 \leq i < j \leq s$, then $a_{i}^{q+1} \neq a_{j}^{q+1}$.
\end{lemma}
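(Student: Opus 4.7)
The plan is to argue by contradiction: suppose there exist $1 \le i < j \le s$ with $a_i^{q+1} = a_j^{q+1}$, and derive a contradiction with the structure of $S$ recorded in~(\ref{eq:choice}).

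First, I would combine the hypothesis $a_i^q + a_i = a_j^q + a_j$ with the supposition $a_i^{q+1} = a_j^{q+1}$ to observe that $a_i$ and $a_j$ share both trace and norm over $\F_q$. Hence both are roots of the monic polynomial $f(X) := X^2 - \Tr(a_i)\, X + \N(a_i) \in \F_q[X]$, which factors over $\F_{q^2}$ as $(X - a_i)(X - a_i^q)$. Since $f$ has at most two roots in $\F_{q^2}$ and $a_i \ne a_j$, it follows that $a_j = a_i^q$.

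Second, I would show this conjugacy is incompatible with the listing~(\ref{eq:choice}). That enumeration exhausts $\F_{q^2}$ without repetition, so for distinct $m, m' \in \{1, \dots, r\}$ one has $\beta_{m'} \ne \beta_m^q$. Moreover, every $\alpha_i$ lies in $\F_q$ and thus equals its own Galois conjugate, so it cannot be the nontrivial conjugate of any $\beta_m$ nor of a distinct $\alpha_{i'}$. Consequently no two distinct elements of $S = \{\alpha_1, \dots, \alpha_t, \beta_1, \dots, \beta_r\}$ are Galois conjugates over $\F_q$, contradicting $a_j = a_i^q$.

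No substantial obstacle is anticipated: the statement is essentially the standard characterization of an element of $\F_{q^2}$ by its $\F_q$-trace and $\F_q$-norm, combined with the design feature of $S$ that at most one element from each nontrivial Galois orbit (and at most the $\F_q$-fixed elements themselves) is selected.
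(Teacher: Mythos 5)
Your proof is correct. It is worth noting that it coincides almost exactly with the alternative proof the paper itself records in the Remark immediately following Lemma~\ref{a0}: equal trace and norm force $a_i$ and $a_j$ to be roots of the same quadratic $X^2-\Tr(a_i)X+\N(a_i)=(X-a_i)(X-a_i^q)$ over $\F_q$, hence $a_j=a_i^q$, which the selection rule in (\ref{eq:choice}) forbids. The paper's displayed proof reaches the same intermediate conclusion $b=a^q$ by a slightly more hands-on route: it solves the trace equation for $b^q$, substitutes into the norm equation, and factors the result as $(a^q-b)(a-b)=0$. The two arguments are equivalent in substance and in length; yours has the mild advantage of making transparent that the lemma is just the statement that an element of $\F_{q^2}$ is determined up to Galois conjugacy by its trace and norm, while the paper's version avoids invoking even that much theory. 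The concluding structural step --- that $S$ contains no pair of distinct Galois conjugates because it takes all of the $\F_q$-elements listed and exactly one representative from each conjugate pair $\{\beta_m,\beta_m^q\}$ --- is the same in both.
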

\begin{proof}
It suffices to show that for distinct $a,b\in S$, 
\[
a^{q}+a=b^{q}+b \text{ implies } a^{q+1}\neq b^{q+1}\text{.}
\]
For a contradiction, suppose that 
\begin{align}
\Tr(a) = a^{q}+a & = b^{q}+b =\Tr(b) \text{ and}\label{1}\\
\N(a) = a^{q+1} &= b^{q+1} = \N(b) \text{.}\label{2}
\end{align}
Substituting $b^q=a^q+a-b$ from (\ref{1}) into (\ref{2}) yields 
\[
(a^q-b)(a-b)=0 \text{,}
\]
implying $a^{q}=b$ since $a \neq b$.

If $a \in \F_{q}$, then $a=a^q=b$, a contradiction.

If $a \in \{\beta_1,\beta_2,\dots,\beta_r\}$, 
then $b=a^q \notin \{\beta_1,\beta_2,\dots,\beta_r\}\cup \F_{q}$ which contradicts $b \in S$.
\end{proof}
\begin{remark}
Using a result in the theory of finite fields (see~\cite[Exercise 2.24]{LN97}) one can easily infer 
that if there are two elements $a$ and $b$ in $\F_{q^{2}}$ such that $\Tr(a)=\Tr(b)$ 
and $\N(a)=\N(b)$, then they have the same minimal polynomial over $\F_{q}$. Hence, either 
$a=b$ or $a = b^{q}$. This constitutes an alternative proof to Lemma~\ref{a0}.
\end{remark}

Now we are ready to construct a subcode of an XL code with dual distance at least $3$.
\begin{proposition}\label{2a}
Let $q \geq 4$ and $\D=C_{q}(t,2,1)$ be the evaluation code associated with $V_{2,1}$. Then $\D^{\perp}$ is a 
$[t+r,t+r-3,d^{\perp}]_q$-code and $\D \subset C_q(t,m,\ell)$ for all  $m\geq 3$. Moreover, $d^{\perp}=4$ for $(q,t)=(4,0)$, 
while for all other cases $d^{\perp}=3$.
\end{proposition}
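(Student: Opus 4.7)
The plan is to unpack $V_{2,1}$ as $\langle 1, x^q+x, x^{q+1}\rangle = \langle 1, \Tr(x), \N(x)\rangle$, so that $\D$ has a $3\times(t+r)$ generator matrix $G$ whose column indexed by $a\in S:=\{\alpha_1,\dots,\alpha_t,\beta_1,\dots,\beta_r\}$ is $\bm{c}_a:=(1,\Tr(a),\N(a))^{\top}$. Theorem~\ref{lx} applied with $m=2$, $\ell=1$ gives $\dim(\D)=3$, hence $\dim(\D^{\perp})=t+r-3$, and the nestedness $\D\subset C_q(t,m,\ell)$ for $m\ge 3$ is immediate from the inclusion $V_{2,1}\subset V_{m,\ell}$.

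To estimate $d^{\perp}$, I would identify weight-$w$ codewords of $\D^{\perp}$ with linearly dependent $w$-subsets of columns of $G$. The bound $d^{\perp}\ge 3$ is quick: two distinct columns $\bm{c}_a,\bm{c}_b$ could be proportional only if equal (the top entry forces the scalar to be $1$), and by Lemma~\ref{a0} this cannot occur for $a\ne b$ in $S$. A triple $\bm{c}_{a_1},\bm{c}_{a_2},\bm{c}_{a_3}$ is dependent iff the $a_i$ are common roots of some $f(x)=Cx^{q+1}+Bx^q+Bx+A$ with $(A,B,C)\ne 0$. If $C\ne 0$, the shift $y=x+B/C$ converts $f(x)=0$ into the norm equation $\N(y)=(B^2-AC)/C^{2}$, which has $q+1$ solutions in $\F_{q^2}^{\times}$ whenever the right-hand side is nonzero; if $C=0$ and $B\ne 0$, one gets the trace equation $\Tr(x)=-A/B$ with $q$ solutions. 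In either case the solution set splits as $s_1$ points in $\F_q$ and $s_2$ Frobenius-conjugate pairs in $\F_{q^2}\setminus\F_q$, each pair contributing exactly one $\beta_j$ to $S$.

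Producing a weight-$3$ codeword then amounts to choosing $(A,B,C)$ so that at least three roots lie in $S$. For $q\ge 5$ odd, picking $(B^2-AC)/C^{2}$ to be a non-square gives $s_1=0$ and $s_2=(q+1)/2\ge 3$, independently of $t$. For $q\ge 8$ even, the line $\Tr(x)=\tau$ with $\tau\ne 0$ yields $s_1=0$ and $s_2=q/2\ge 4$, because $\ker\Tr=\F_q$ in characteristic two. For $q=4$ and $t\ge 3$, the same line $\Tr(x)=0$ contains three of the $\alpha$'s. For $q=4$ and $t\in\{1,2\}$, I would set $C=1$, $A=\alpha_1^{2}$, and pick any $B\ne\alpha_1$; then $(B^2-A)/C^{2}=(B+\alpha_1)^{2}$ is a nonzero square, producing a line through $\alpha_1$ together with $q/2=2$ of the $\beta_j$. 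This settles $d^{\perp}=3$ whenever $(q,t)\ne(4,0)$.

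For the exceptional pair $(q,t)=(4,0)$ the same bookkeeping forces $d^{\perp}\ge 4$: in characteristic two every element of $\F_q$ is a square, so any norm equation has $s_1\ge 1$ and hence $s_2\le 2$, while any nontrivial trace equation has $s_1=0$ but only $s_2=q/2=2$. With $t=0$ this caps the count of elements of $S$ on any line at two, so no three columns of $G$ are dependent. Since any four columns of $G$, viewed as vectors in $\F_{4}^{3}$, must be linearly dependent, and the dependency cannot concentrate on only three of them without contradicting the previous step, it has full support, whence $d^{\perp}=4$. The main obstacle I foresee is clean bookkeeping of $(s_1,s_2)$ across the odd and even characteristic cases; once that split is in hand, $(q,t)=(4,0)$ emerges as the unique parameter pair in the stated range where the inequality $s_2\le 2$ is tight.
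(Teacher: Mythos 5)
Your proof is correct, and it takes a genuinely different route from the paper's in the parts that matter. The setup (generator matrix with columns $(1,\Tr(a),\N(a))^{\top}$, dimension count, and $d^{\perp}\ge 3$ via Lemma~\ref{a0}) coincides with the paper. Where you diverge is in exhibiting and excluding dependent triples: the paper handles $q=4$ and $q=5$ by writing down explicit dependent column triples (and, for $(q,t)=(4,0)$, by a computer-algebra check that every three columns of $\mathcal{B}$ are independent), and only for $q\ge 7$ does it give a structural argument, namely that the fibre $\Tr^{-1}(1)$ contributes $q/2$ or $(q-1)/2\ge 3$ of the $\beta_j$'s, forcing three columns with equal first and second rows to be dependent. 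You instead characterize dependent triples uniformly as common zeros of $A+B\Tr(x)+C\N(x)$ and reduce, via the substitution $y=x+B/C$, to counting how a norm fibre or a trace fibre splits into $\F_q$-points and Frobenius-conjugate pairs. This buys you several things: the odd case is handled uniformly for all $q\ge 5$ by taking a non-square norm value (so the paper's separate $q=5$ computation disappears); the even case $q\ge 8$ recovers the paper's trace-fibre argument; and, most notably, the exceptional case $(q,t)=(4,0)$ gets a computation-free proof, since your bookkeeping shows every affine line in the $(\Tr,\N)$-coordinates meets $\{\beta_1,\dots,\beta_6\}$ in at most two points, which both proves $d^{\perp}\ge 4$ and explains structurally why $(4,0)$ is the unique exception. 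The only step I would ask you to spell out in a final write-up is the $(s_1,s_2)$ table itself (fibre sizes $q+1$ for the norm and $q$ for the trace, $\F_q$-points of a norm fibre being square roots of the right-hand side, and Frobenius-stability of both fibres), since the whole argument rests on it; as you note, once that is in place everything else is routine.
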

\begin{proof}
Since $\D$ is a $[t+r,3,d_{\D}]_{q}$-code, it is clear that $\D^{\perp}$ has length $t+r$ and dimension $t+r-3$. 
By how $\D$ is defined, evaluating based on the given basis $\{1,x^{q}+x,x^{q+1}\}$ gives us a generator matrix 
$G_{\D}=\left(\mathcal{A}|\mathcal{B}\right)$ with
\begin{equation}\label{submat}
\mathcal{A}:=
\left(\begin{array}{c c c c}
1&1&\dots&1\\
\alpha_1^q+\alpha_1	&\alpha_2^q+\alpha_2&\dots&\alpha_t^q+\alpha_t\\
\alpha_1^{q+1} 	&\alpha_2^{q+1} &\dots&\alpha_t^{q+1}
\end{array}\right) \text{, }
\mathcal{B}:=
\left(\begin{array}{c c c c}
1&1&\dots&1\\
\beta_1^q+\beta_1&\beta_2^q+\beta_2&\dots&\beta_r^q+\beta_r\\
\beta_1^{q+1}&\beta_2^{q+1}&\dots&\beta_r^{q+1}
\end{array}\right)\text{.}
\end{equation}
Using Lemma~\ref{a0}, it is easy to verify that any two distinct columns of $G_{\D}$ must be linearly independent. 
Hence, $d^{\perp} \geq 3$. 

Following how the elements of $\F_{q^{2}}$ are defined in Table~\ref{table:albe}, let $\bv(\alpha_{i})$, for $1 \leq i \leq q$, 
be the column vector of $G_{\D}$ associated with the element $\alpha_{i}$. Similarly, let $\bw(\beta_{j})$, for $1 \leq j \leq r$, 
be the column vector associated with the element $\beta_{j}$. The matrix $G_{\D}$ for $t=q$ can then be explicitly constructed 
and erasing the appropriate column(s) from $G_{\D}$ gives us the matrix $G_{\D}$ for lower values of $t$. 

Note that if one chooses a different ordering of the elements in the sets $\{\alpha_{1}\}_{i=1}^{q}$ and 
$\{\beta_{j}\}_{j=1}^{r}$, then the resulting generator matrix $G_{\D}$ is formed by permuting the columns of 
$\mathcal{A}$ and the columns of $\mathcal{B}$ separately in accordance with the ordering.

The proposition requires $q \geq 4$ to ensure $m \geq 3$. When $q = t = 4$, evaluating according to Table~\ref{table:albe},
\begin{equation*}
\mathcal{A}=
\left(\begin{array}{c c c c}
1 & 1 & 1      & 1    \\
0 & 0 & 0      & 0    \\
1 & 0 & a^{10} & a^{5} 
\end{array}\right) \text{, }
\mathcal{B}=
\left(\begin{array}{c c c c c c}
1   & 1      & 1      & 1   & 1   & 1 \\
1   & 1      & a^{10} & a^5 & a^5 & a^{10} \\
a^5 & a^{10} & 1      & 1   & a^5 & a^{10} 
\end{array}\right)\text{.}
\end{equation*}
When $t=0$, we have $G_{\D}=\mathcal{B}$. We verify using computer algebra that any three columns of $\mathcal{B}$ 
form a linearly independent set while the last four columns are linearly dependent. The code $\D^{\perp}$ is therefore 
a $[6,3,4]_{4}$-MDS code.

It is straightforward to verify that the sets
\begin{align*}
&\{\bv(1), \bw(a^{3}), \bw(a^{6})\},\{\bv(0),\bw(a^{7}),\bw(a^{11})\},\\
&\{\bv(a^5),\bw(a^{2}),\bw(a^{11})\}\text{, and }\{\bv(a^{10}), \bw(a), \bw(a^{7})\}
\end{align*}
are all linearly dependent. Therefore, $d^{\perp}=3$ for $1 \leq t \leq 4$.

Our task is slightly easier when $q =5$. The matrix $\mathcal{B}$ defined by $\{\bw(\beta_{j})\}_{j=1}^{10}$ is given by 
\begin{equation}
\mathcal{B}=\left( 
\begin{array}{cccccccccc}
1& 1& 1& 1& 1& 1& 1& 1& 1& 1\\
1& 2& 0& 1& 2& 4& 0& 4& 3& 3\\
2& 4& 3& 1& 3& 1& 2& 2& 4& 3 
\end{array}
\right). 
\end{equation}
Since it is clear that columns $1,7$, and $8$, corresponding to the set $\{\bw(a), \bw(a^{9}),\bw(a^{13})\}$, of $\mathcal{B}$ 
form a linearly dependent set, $d^{\perp}=3$ for $0 \leq t \leq q$.

Finally, for $q \geq 7$ a more general argument works in all possible cases. The trace mapping $\Tr$ is a linear transformation from $\F_{q^{2}}$ 
onto $\F_{q}$. For all $\alpha \in \F_{q}$, $\Tr(\alpha)=2 \alpha$ and for all $\beta \in \F_{q^{2}}$, we have $\Tr(\beta^{q})=\Tr(\beta)$. Hence, 
when $q$ is even, $\Tr$ maps $\F_{q}$ onto $\{0\}$, while for odd $q$, when restricted to elements of $\F_{q}$, $\Tr$ is a one-to-one mapping. 
Further, since $\Tr$ is onto, there are $q$, respectively, $q-1$, elements in $\F_{q^{2}} \setminus \F_{q}$ whose image is $1$ when $q$ is even, 
respectively, when $q$ is odd. Let $S$ be the set of all elements in $\F_{q^{2}} \setminus \F_{q}$ that $\Tr$ sends to $1$. Thus,
\begin{equation}
S \cap \{\beta_1,\beta_2,\dots,\beta_r\}= 
\begin{cases}
\{\beta_{i_1},\beta_{i_2},\dots,\beta_{i_{q/2}}\}&\text{if $q$ is even,} \\
\{\beta_{i_1},\beta_{i_2},\dots,\beta_{i_{(q-1)/2}}\}&\text{if $q$ is odd.}
\end{cases}
\end{equation}
The set of columns $\{\bw(\beta_{i_1}),\bw(\beta_{i_2}),\dots,\bw(\beta_{i_{l}})\}$ is linearly dependent if $l=q/2$ and $q$ 
is even or if $l=(q-1)/2$ and $q$ is odd. We conclude that, whenever $q\geq 7$, there always exist $3$ linearly dependent columns, making $d^{\perp}=3$. 
The proof is therefore complete.
\end{proof}

We make use of a specially designed vector space to get a subcode of dual distance $\geq 4$.
\begin{proposition}\label{3a}
Let $q \geq 4$. Consider the $\F_{q}$-vector space
\[
\W_{5}:=\langle \{1,x^{q}+x,(x^{q}+x)^2,x^{q+1},x^{2(q+1)}\} \rangle \text{.}
\]
Let $\cE$ be the evaluation code associated with $\W_{5}$. That is,
\[
\cE=\{(f(\alpha_1), \dots,f(\alpha_t),f(\beta_1), \dots,f(\beta_r))\mid f(x)\in \W_{5} \}.
\]
Then $\cE^{\perp}$ is a $[t+r, t+r-5,d^{\perp}\geq 4]_q$-code and, for all $m\geq 4$ and $0\leq \ell \leq m-1$, 
\[
\cE \subset C_{q}(t,3,2) \subset C_q(t,m,\ell)\text{.}
\]
For $(q,t)=(4,0)$, $\cE^{\perp}$ is a $[6,1,6]_{4}$-MDS code, while for $q=4$ with $1 \leq t \leq 4$ 
and for $q \geq 5$, we have $d^{\perp}=4$.
\end{proposition}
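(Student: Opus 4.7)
The plan is to establish the four assertions in order: the inclusion of evaluation spaces, the dimension of $\cE^\perp$, the lower bound $d^\perp \ge 4$, and the exact value of $d^\perp$ in the cases stated.

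First, I would verify the inclusion $\W_5 \subset V_{3,2}$ by rewriting each generator of $\W_5$ in terms of the basis elements $e_{i,j}(x)$ of $V_{3,2}$: namely $1 = e_{0,0}$, $x^q+x = e_{0,1}$, $x^{q+1} = e_{1,1}$, $x^{2(q+1)} = e_{2,2}$, and $(x^q+x)^2 = x^{2q}+2x^{q+1}+x^2 = e_{0,2} + 2\,e_{1,1}$. This gives $\cE \subset C_q(t,3,2)$, and combined with the chain $V_{3,2} \subset V_{m,\ell}$ for $m \ge 4$ noted at the end of Section~\ref{subsec:construct} it completes the inclusion claim. The five polynomial generators of $\W_5$ are linearly independent (only $(x^q+x)^2$ contributes the monomial $x^{2q}$, and the remaining four are monomials of pairwise distinct degrees), so $\dim \W_5 = 5$. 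Theorem~\ref{lx} gives $\dim C_q(t,3,2) = 6$, hence evaluation is injective on $V_{3,2}$ and a fortiori on $\W_5$, yielding $\dim \cE = 5$ and $\dim \cE^\perp = t+r-5$.

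Second, for $d^\perp \ge 4$ I would show that any three columns of a generator matrix of $\cE$ are linearly independent. Writing $T(\gamma):=\Tr(\gamma)$ and $N(\gamma):=\N(\gamma)$, the column at $\gamma \in S$ is
\[
C(\gamma) = \bigl(1,\, T(\gamma),\, T(\gamma)^2,\, N(\gamma),\, N(\gamma)^2\bigr)^T.
\]
Suppose $\sum_{i=1}^{3}\lambda_i\, C(a_i) = 0$ for distinct $a_1,a_2,a_3 \in S$, and set $T_i = T(a_i)$, $N_i = N(a_i)$. If the $T_i$ are pairwise distinct, rows $1,2,3$ form a $3\times 3$ Vandermonde matrix forcing $\lambda_1 = \lambda_2 = \lambda_3 = 0$. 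If $T_1 = T_2 = T_3$, Lemma~\ref{a0} makes the $N_i$ pairwise distinct and the same Vandermonde argument applies to rows $1,4,5$. Otherwise, WLOG $T_1 = T_2 \ne T_3$; rows $1$ and $2$ together give $(T_3 - T_1)\lambda_3 = 0$, so $\lambda_3 = 0$ and $\lambda_2 = -\lambda_1$, and row $4$ then becomes $(N_1 - N_2)\lambda_1 = 0$ with $N_1 \ne N_2$ by Lemma~\ref{a0}. In every case the dependence is trivial.

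Third, I would pin down $d^\perp$ exactly. For $(q,t) = (4,0)$ the length is $6$ and $\dim \cE^\perp = 1$; a direct check using the explicit $\mathcal{B}$ arising from Table~\ref{table:albe} shows that the unique (up to scalar) generator of $\cE^\perp$ has full Hamming weight $6$, giving the claimed $[6,1,6]_4$-MDS code. In every other case I would exhibit four linearly dependent columns of $G_\cE$ by selecting four points of $S$ that split into two trace-pairs $\{a_1,a_2\}$ and $\{a_3,a_4\}$ with $T(a_1) = T(a_2) \ne T(a_3) = T(a_4)$ and satisfying the single scalar coincidence $N(a_1) + N(a_2) = N(a_3) + N(a_4)$; a short Vandermonde-style computation parallel to the one above then produces the required nontrivial linear relation. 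Such a quadruple can be read off Table~\ref{table:albe} directly for small $q$, while for larger $q$ the abundance of $\beta_j$'s sharing any prescribed trace, together with the flexibility offered by the $\alpha_i$'s, ensures the required pattern exists. Carrying out this last verification uniformly across all admissible $(q,t)$ is the main obstacle, and is the counterpart of the case analysis already seen in the proof of Proposition~\ref{2a}.
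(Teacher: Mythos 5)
Your first three steps are sound and essentially reproduce the paper's argument: the identification of $\W_{5}$ with $\langle e_{0,0},e_{0,1},e_{0,2},e_{1,1},e_{2,2}\rangle$ inside $V_{3,2}$, the dimension count via injectivity of evaluation, and the proof that any three columns of $G_{\cE}$ are independent. (Your case split on the traces is the mirror image of the paper's Appendix A, which splits on the norms; both reduce to a $3\times3$ Vandermonde plus Lemma~\ref{a0}, so this is a cosmetic difference only.)

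The genuine gap is in the last step, where you must exhibit four dependent columns. Your proposed mechanism --- two trace-pairs $\{a_1,a_2\}$, $\{a_3,a_4\}$ with $\Tr(a_1)=\Tr(a_2)\neq\Tr(a_3)=\Tr(a_4)$ and the extra coincidence $\N(a_1)+\N(a_2)=\N(a_3)+\N(a_4)$ --- is correctly analyzed (the $2\times2$ determinant in rows $4,5$ vanishes exactly under that condition), but its existence is only asserted, and it actually fails in at least one admissible case. For $(q,t)=(4,1)$ the set $S$ consists of $\alpha_1=1$ (alone in its trace class, so unusable in a pair) and six $\beta_j$'s forming exactly three trace-pairs whose norm-sums are $1$, $a^{5}$, and $a^{10}$ --- pairwise distinct --- so no quadruple of your prescribed form exists, even though $d^{\perp}=4$ there. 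The paper's dependent set for that case, $\{\bv(1),\bw(a),\bw(a^{3}),\bw(a^{7})\}$, realizes the dual pattern (two \emph{norm}-pairs with equal trace-sums). More importantly, for $q\geq 7$ the paper avoids any scalar coincidence altogether: since the norm maps $\F_{q^{2}}\setminus\{0\}$ onto $\F_{q}\setminus\{0\}$ with $\N(\beta^{q})=\N(\beta)$, a pigeonhole count produces four $\beta_j$'s with a common norm $\gamma$, and the corresponding four columns have their last two rows equal to $\gamma$ and $\gamma^{2}$ times the all-ones row, so the $5\times4$ submatrix has rank at most $3$ and the columns are automatically dependent. Your "abundance of $\beta_j$'s sharing a prescribed trace" does not by itself force the required norm-sum collision (for $q=7$ each trace class yields only three candidate pairs), so to close the argument you would need either this norm-pigeonhole idea for $q\geq7$ together with explicit column sets for $q\in\{4,5\}$, or some other uniform mechanism that does not hinge on an unverified coincidence.
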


\begin{proof}
Note that, in terms of $e_{i,j}(x)$ as defined in (\ref{basis}),
\begin{equation*}
\W_{5} = \langle \{e_{0,0}(x),e_{0,1}(x),e_{0,2}(x),e_{1,1}(x),e_{2,2}(x) \} \rangle \text{.}
\end{equation*}
By consulting Figure 1, it is clear that $V_{3,2} = \W_{5} \oplus \langle \{e_{1,2}(x)\} \rangle$.
Therefore, $\cE \subset C_{q}(t,3,2) \subset C_q(t,m,\ell)$ for all $m\geq 4$ and $0\leq \ell \leq m-1$. 
When $q=4$, only the first inclusion $\cE \subset C_{q}(t,3,2)$ is valid. 
Arguing in a similar manner as in the proof of Proposition~\ref{2a} above, $G_{\cE}$ given in (\ref{CheckMat2}) 
generates $\cE$.
\begin{small}
\begin{equation}\label{CheckMat2}
G_{\cE}=\left(\begin{array}{cccccccc}
1&1&\dots&1&1&1&\dots&1\\
\noalign{\smallskip}
\alpha_1^q+\alpha_1&\alpha_2^q+\alpha_2&\dots&\alpha_t^q+\alpha_t&\beta_1^q+\beta_1&\beta_2^q+\beta_2&\dots&\beta_r^q+\beta_r\\
\noalign{\smallskip}
(\alpha_1^q+\alpha_1)^2&(\alpha_2^q+\alpha_2)^2&\dots&(\alpha_t^q+\alpha_t)^2&(\beta_1^q+\beta_1)^2&(\beta_2^q+\beta_2)^2&\dots&(\beta_r^q+\beta_r)^2\\
\noalign{\smallskip}
\alpha_1^{q+1} 	&\alpha_2^{q+1} &\dots&\alpha_t^{q+1}&\beta_1^{q+1}&\beta_2^{q+1}&\dots&\beta_r^{q+1}\\
\noalign{\smallskip}
\alpha_1^{2(q+1)} 	&\alpha_2^{2(q+1)} &\dots&\alpha_t^{2(q+1)}&\beta_1^{2(q+1)}&\beta_2^{2(q+1)}&\dots&\beta_r^{2(q+1)}
\end{array}\right)\text{.}
\end{equation}
\end{small}

The length and the dimension of $\cE^{\perp}$ can be easily verified, so we proceed to showing that $d(\cE^{\perp}) \geq 4$. Let
\begin{equation}\label{M3}
\begin{pmatrix}
1 \\ a^q+a \\ (a^q+a)^2 \\ a^{q+1} \\ a^{2(q+1)}
\end{pmatrix},
\begin{pmatrix}
1 \\ b^q+b \\ (b^q+b)^2 \\ b^{q+1} \\ b^{2(q+1)}
\end{pmatrix}, \text{ and }
\begin{pmatrix}
1 \\ c^q+c \\ (c^q+c)^2 \\ c^{q+1} \\ c^{2(q+1)}
\end{pmatrix}
\end{equation}
be any three distinct columns of $G_{\cE}$. Appendix A establishes that these columns are linearly independent.

As in the proof of Proposition~\ref{2a}, we partition $G_{\cE}$ into two matrices $\mathcal{A'}$ and $\mathcal{B'}$ according 
to the ordering of the elements $\alpha_{i}$s and $\beta_{j}$s in Table~\ref{table:albe} followed by their evaluations. Hence,
$G_{\cE}=\left(\mathcal{A'}|\mathcal{B'}\right)$. For $q=t=4$, the component matrices can be constructed explicitly as
\begin{equation}\label{d4}
\mathcal{A'}:=
\left(\begin{array}{c c c c}
1 & 1 & 1      & 1  \\
0 & 0 & 0      & 0  \\
0 & 0 & 0      & 0  \\
1 & 0 & a^{10} & a^{5} \\
1 & 0 & a^{5}  & a^{10}
\end{array}\right) \text{ and }
\mathcal{B'}:=
\left(\begin{array}{c c c c c c}
1   & 1      & 1      & 1      & 1      & 1 \\
1   & 1      & a^{10} & a^5    & a^5    & a^{10} \\
1   & 1      & a^{5}  & a^{10} & a^{10} & a^{5} \\
a^5 & a^{10} & 1      & 1   & a^5 & a^{10} \\
a^{10} & a^5 & 1      & 1   & a^{10} & a^5
\end{array}\right)\text{.}
\end{equation}

When $(q,t)=(4,0)$, a computer algebra verification certifies that every five columns of $G_{\cE}=\mathcal{B'}$ are linearly independent, 
proving that $d^{\perp}=6$.

To show that $d^{\perp}=4$ for $q=4$ when $1 \leq t \leq 4$, it is enough to exhibit that for any column of 
$\mathcal{A'}$ one can choose three columns of $\mathcal{B'}$ so that the four columns form a dependent set. 
The reader is invited to verify that the sets
\begin{align*}
&\{\bv(1), \bw(a), \bw(a^{3}), \bw(a^{7})\}, \{\bv(a^5),\bw(a^{2}),\bw(a^{3}), \bw(a^{6})\},\\ 
&\{\bv(0),\bw(a^{2}),\bw(a^{3}),\bw(a^{7})\}\text{, and }\{\bv(a^{10}), \bw(a), \bw(a^{3}), \bw(a^{6})\} 
\end{align*}
are indeed linearly dependent.

For $q=5$, the matrix $\mathcal{B'}$ is given by
\begin{equation*}
\mathcal{B'}:=
\left(\begin{array}{c c c c c c c c c c}
1 & 1 & 1 & 1 & 1 & 1 & 1 & 1 & 1 & 1 \\
1 & 2 & 0 & 1 & 2 & 4 & 0 & 4 & 3 & 3 \\
1 & 4 & 0 & 1 & 4 & 1 & 0 & 1 & 4 & 4 \\
2 & 4 & 3 & 1 & 3 & 1 & 2 & 2 & 4 & 3 \\
4 & 1 & 4 & 1 & 4 & 1 & 4 & 4 & 1 & 4 
\end{array}\right)\text{.}
\end{equation*}
Columns $1,2,8$ and $9$ are linearly dependent since 
\[
\bw(a) + 2 \bw(a^{2}) + 4 \bw(a^{13}) +3 \bw(a^{14}) = \0 \text{,}  
\]
making $d^{\perp}=4$ for all $t$.

Finally, let $q \geq 7$. The strategy is to exhibit that it is always possible to choose four columns of $\mathcal{B'}$ 
such that they form a $5\times4$ matrix of rank at most $3$. Note that the norm mapping $\N$ maps $\F_{q^{2}} \setminus\{0\}$ onto 
$\F_{q}\setminus\{0\}$ and that $\N(\alpha^{q})=\N(\alpha)$ for all $\alpha \in \F_{q^{2}}$.

By the pigeonhole principle, among the $r$ elements $\beta_{j}$s, there are always at least four elements, 
say $\beta_{j_{1}}, \beta_{j_{2}}, \beta_{j_{3}}$, and $\beta_{j_{4}}$, with 
$\N(\beta_{j_{1}})=\N(\beta_{j_{2}})=\N(\beta_{j_{3}})=\N(\beta_{j_{4}}) = \gamma \in \F_{q}\setminus \{0\}$. 
The four columns $\bw(\beta_{j_{1}}),\bw(\beta_{j_{2}}),\bw(\beta_{j_{3}}),\bw(\beta_{j_{4}})$ have the same corresponding entries in 
their last two rows and, hence, form a matrix of rank at most $3$, implying that $d^{\perp}=4$.
\end{proof}

Next, we design a subcode of an XL code with dual distance $\geq 5$.
\begin{proposition}\label{4a}
Let $q\geq 5$. Consider the $\F_{q}$-vector space
\begin{equation*}
\W_{8} = \langle \{1, x^{q+1}, x^{2(q+1)}, x^{3(q+1)}, x^{q}+x, x^{2q}+x^{2}, x^{3q}+x^3, x^{2q+1}+x^{q+2} \} \rangle \text{.} 
\end{equation*}
Let $\cF$ be the evaluation code associated with $\W_{8}$. That is,
\[
\cF=\{(f(\alpha_1), \dots,f(\alpha_t),f(\beta_1), \dots,f(\beta_r))\mid f(x)\in \W_{8} \}.
\]
Define $\V:=\W_{8} \oplus \langle \{ e_{1,3}(x)\} \rangle$ and $\V':=\W_{8} \oplus \langle \{ e_{2,3}(x)\} \rangle$. 
Let $\cF_{1}$ and $\cF_{2}$ be, respectively, the linear $q$-ary evaluation codes associated with the spaces $\V$ and $\V'$.

Then $\cF^{\perp}$ is a $[t+r, t+r-8,d^{\perp}\geq 5]_q$-code and, for all $m\geq 5$ and $0\leq \ell \leq m-1$,
\begin{align}\label{nest}
& \cF \subset \cF_{1} \subset C_{q}(t,4,3) \subset C_q(t,m,\ell) \text{, and}\notag\\
& \cF \subset \cF_{2} \subset C_{q}(t,4,3) \subset C_q(t,m,\ell) \text{.}
\end{align}
\end{proposition}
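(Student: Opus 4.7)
The plan is to verify, in turn, the nested chain in (\ref{nest}), the length and dimension of $\cF^{\perp}$, and the minimum-distance lower bound $d^{\perp}\geq 5$. The first two assertions are structural and drop out of a direct inspection of the bases listed in Figure~1; the third is the substantive part.

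\textbf{Nesting and dimensions.} Reading off Figure~1, the space $V_{4,3}$ is generated by $\{e_{i,j}(x):0\leq i\leq j\leq 2\}\cup\{e_{i,3}(x):0\leq i\leq 3\}$, and its eight-element subset $\{e_{0,0},e_{0,1},e_{0,2},e_{0,3},e_{1,1},e_{1,2},e_{2,2},e_{3,3}\}$ spans $\W_{8}$. Thus $V_{4,3}=\W_{8}\oplus\langle e_{1,3}(x),e_{2,3}(x)\rangle$, so $\V=\W_{8}\oplus\langle e_{1,3}(x)\rangle$ and $\V'=\W_{8}\oplus\langle e_{2,3}(x)\rangle$ lie strictly between $\W_{8}$ and $V_{4,3}$. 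Combined with $V_{4,3}\subset V_{m,\ell}$ for $m\geq 5$ recorded after Theorem~\ref{lx}, the chain (\ref{nest}) follows by evaluation. Since the evaluation map is injective on $V_{m,\ell}$ (as Theorem~\ref{lx} assigns $C_{q}(t,m,\ell)$ the dimension $\binom{m}{2}+\ell+1$), one obtains $\dim\cF=\dim\W_{8}=8$, whence $\cF^{\perp}$ has parameters $[t+r,t+r-8,d^{\perp}]_{q}$.

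\textbf{Dual distance.} To show $d^{\perp}\geq 5$ it suffices to prove that every four distinct columns of a generator matrix $G_{\cF}$ of $\cF$ are linearly independent. The key simplification is the identities
\begin{align*}
a^{2q}+a^{2}&=\Tr(a)^{2}-2\N(a),\\
a^{3q}+a^{3}&=\Tr(a)^{3}-3\N(a)\Tr(a),\\
a^{2q+1}+a^{q+2}&=\N(a)\Tr(a),
\end{align*}
which permit a sequence of row operations transforming the column indexed by $a\in S$ into
\[
c(a)=(1,\N(a),\N(a)^{2},\N(a)^{3},\Tr(a),\Tr(a)^{2},\Tr(a)^{3},\N(a)\Tr(a))^{T}.
\]
Fix distinct $a_{1},\ldots,a_{4}\in S$ and set $(t_{i},n_{i}):=(\Tr(a_{i}),\N(a_{i}))$. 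By Lemma~\ref{a0} and its Remark, the pairs $(t_{i},n_{i})$ are pairwise distinct and satisfy both ``equal traces force distinct norms'' and ``equal norms force distinct traces''. Suppose a dependence $\sum_{i}\lambda_{i}c(a_{i})=\0$ exists. The first four rows yield $\sum_{i}\lambda_{i}n_{i}^{j}=0$ for $j=0,1,2,3$, and rows $1,5,6,7$ yield $\sum_{i}\lambda_{i}t_{i}^{j}=0$ for the same $j$. If either the $n_{i}$ or the $t_{i}$ are pairwise distinct, a Vandermonde argument immediately forces $\lambda=\0$.

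\textbf{Remaining cases and main obstacle.} Lemma~\ref{a0} limits the repetition patterns in which both lists $\{n_{i}\}$ and $\{t_{i}\}$ have coincidences to a short list: a genuine rectangle $\{(t,n_{1}),(t,n_{2}),(t',n_{1}),(t',n_{2})\}$, asymmetric patterns with three equal traces or three equal norms plus one further coincidence, and, for $q$ large enough, four equal traces. In every non-rectangular case, the Vandermonde on the three or four distinct values in one variable, combined with the analogous system in the other variable, pins down all $\lambda_{i}$ to $0$ without invoking the last row. The rectangle is the only configuration that survives the first seven rows: these force $\lambda=\lambda_{1}(1,-1,-1,1)$, and then row $8$ evaluates to $\sum_{i}\lambda_{i}t_{i}n_{i}=\lambda_{1}(t-t')(n_{1}-n_{2})$, which is nonzero, forcing $\lambda_{1}=0$. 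The main obstacle is the bookkeeping of this case analysis, together with some care for configurations where a trace or norm equals zero, since the trace- or norm-Vandermonde can then degenerate and one must fall back on the cross-term in row $8$ to conclude.
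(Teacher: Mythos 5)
Your proof is correct and follows essentially the same route as the paper's: the same reduction of each column to $(1,\N(a),\N(a)^{2},\N(a)^{3},\Tr(a),\Tr(a)^{2},\Tr(a)^{3},\N(a)\Tr(a))^{T}$, the same reliance on Lemma~\ref{a0}, and the same case analysis on trace/norm coincidences in which the $2{+}2$ pattern (your ``rectangle'', the paper's subcase $a^{q+1}=b^{q+1}\neq c^{q+1}=d^{q+1}$ of Case II in Appendix B) is the only configuration that needs the cross-term row $\N(a)\Tr(a)$. Two cosmetic remarks: your explicit ``short list'' of surviving patterns omits some mixed configurations (e.g.\ one trace-pair and one norm-pair meeting in a single index), but this is harmless because the block-sum/Vandermonde claim you state next is correct and disposes of every non-rectangular case; and a zero trace or norm does not degenerate a Vandermonde determinant with distinct nodes, so that final caveat is unnecessary.
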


\begin{proof} 
A closer look at Figure 1 reveals that
\[
V_{4,3} = \W_{8} \oplus \langle \{e_{1,3}(x),e_{2,3}(x) \} \rangle \text{,} 
\]
justifying the nestedness presented in (\ref{nest}).

Moreover, since
\begin{equation*}
(x^{q}+x)^{2} \in \langle \{ x^{2q}+x^{2}, x^{q+1}\} \rangle \text { and } (x^{q}+x)^{3} \in \langle \{ x^{3q}+x^{3}, x^{2q+1}+x^{q+2}\} \rangle \text{,} 
\end{equation*}
we can rewrite
\begin{equation}\label{ease}
\W_{8} = \langle \{ 1, x^{q+1}, x^{2(q+1)}, x^{3(q+1)}, x^{q}+x,(x^{q}+x)^2, (x^{q}+x)^3, x^{2q+1}+x^{q+2} \} \rangle \text{.} 
\end{equation}
Evaluating based on the elements of $\W_{8}$ as expressed in (\ref{ease}) gives us a generator matrix 
$G_{\cF}$ in (\ref{CheckMat4}) of the code $\cF$. 
\begin{small}
\begin{equation}\label{CheckMat4}
G_{\cF}=\left(\begin{array}{ccccccc}
1&\dots&1&1&1&\dots&1\\
\noalign{\smallskip}
\alpha_1^{q+1}          &\dots&\alpha_t^{q+1}&\beta_1^{q+1}&\beta_2^{q+1}&\dots&\beta_r^{q+1}\\
\noalign{\smallskip}
\alpha_1^{2(q+1)}       &\dots&\alpha_t^{2(q+1)}&\beta_1^{2(q+1)}&\beta_2^{2(q+1)}&\dots&\beta_r^{2(q+1)}\\
\noalign{\smallskip}
\alpha_1^{3(q+1)}       &\dots&\alpha_t^{3(q+1)}&\beta_1^{3(q+1)}&\beta_2^{3(q+1)}&\dots&\beta_r^{3(q+1)}\\
\noalign{\smallskip}
\alpha_1^q+\alpha_1     &\dots&\alpha_t^q+\alpha_t&\beta_1^q+\beta_1&\beta_2^q+\beta_2&\dots&\beta_r^q+\beta_r\\
\noalign{\smallskip}
(\alpha_1^q+\alpha_1)^2 &\dots&(\alpha_t^q+\alpha_t)^2&(\beta_1^q+\beta_1)^2&(\beta_2^q+\beta_2)^2&\dots&(\beta_r^q+\beta_r)^2\\
\noalign{\smallskip}
(\alpha_1^q+\alpha_1)^3 &\dots&(\alpha_t^q+\alpha_t)^3&(\beta_1^q+\beta_1)^3&(\beta_2^q+\beta_2)^3&\dots&(\beta_r^q+\beta_r)^3\\
\noalign{\smallskip}
\alpha_1^{2q+1}+\alpha_1^{q+2}&\dots&\alpha_t^{2q+1}+\alpha_t^{q+2}&\beta_1^{2q+1}+\beta_1^{q+2}&\beta_2^{2q+1}+\beta_2^{q+2}&\dots&\beta_r^{2q+1}+\beta_r^{q+2}
\end{array}\right)\text{.}
\end{equation}
\end{small}
Now, choose any four distinct columns $\cC_{1}$ to $\cC_{4}$ of $G_{\cF}$, say,
\begin{small}
\begin{equation}\label{M4}
\cC_{1}:=\begin{pmatrix}
1 \\ a^{q+1} \\ a^{2(q+1)} \\ a^{3(q+1)} \\ a^q+a \\ (a^q+a)^2 \\ (a^q+a)^3 \\ a^{2q+1}+a^{q+2}
\end{pmatrix}, 
\cC_{2}:=\begin{pmatrix}
1 \\ b^{q+1} \\ b^{2(q+1)} \\ b^{3(q+1)} \\ b^q+b \\ (b^q+b)^2 \\ (b^q+b)^3 \\ b^{2q+1}+b^{q+2}
\end{pmatrix},
\cC_{3}:=\begin{pmatrix}
1 \\ c^{q+1} \\ c^{2(q+1)} \\ c^{3(q+1)} \\ c^q+c \\ (c^q+c)^2 \\ (c^q+c)^3 \\ c^{2q+1}+c^{q+2}
\end{pmatrix},
\cC_{4}:=\begin{pmatrix}
1 \\ d^{q+1} \\ d^{2(q+1)} \\ d^{3(q+1)} \\ d^q+d \\ (d^q+d)^2 \\ (d^q+d)^3 \\ d^{2q+1}+d^{q+2}
\end{pmatrix}\text{.}
\end{equation}
\end{small}

A detailed proof of the linear independence of these four columns is in Appendix B.
\end{proof}
\begin{remark}\label{rem:5}
While we are yet to supply a proof, our computation indicates that $d(\cF^{\perp})=6$ 
for $q=5$ when $t \leq 3$, for $q=7$ when $t \leq 1$, and for $q=8$ when $t=0$, while for all other cases, $d(\cF^{\perp})=5$.
\end{remark}

\section{The Resulting Asymmetric Quantum Codes}\label{sec:AQC}
Using the standard CSS construction in Theorem \ref{css} and the results obtained in Section \ref{sec:XL}, 
the following theorems can then be established.

\begin{theorem}\label{th:2}
Let $0\leq t\leq q$, $2\leq m\leq q-1$ and $0\leq \ell \leq m-1$. Then there exists an 
\begin{equation*}
[[n,k,\{d_{z} \geq \delta, d_{x}=2\}]]_q \text{-code}
\end{equation*}
with 
\begin{equation*}
n=t+(q^2-q)/2, k= m(m-1)/2+\ell \text{, and } \delta \text{ as expressed in (\ref{ddd}).} 
\end{equation*}
\end{theorem}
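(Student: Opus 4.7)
The plan is to invoke the standard CSS construction of Theorem~\ref{css} with a very simple nested pair drawn from the XL family. Take $C_2 := C_q(t,m,\ell)$, whose parameters are supplied by Theorem~\ref{lx}, and take $C_1^\perp := C_q(t,1,0)$, the $[n,1,n]_q$-repetition code. Since $V_{1,0}=\langle 1\rangle \subset V_{m,\ell}$ for every $2\le m\le q-1$ and $0\le \ell\le m-1$, the inclusion $C_1^\perp\subset C_2$ holds automatically, so the hypotheses of Theorem~\ref{css} are met. Proposition~\ref{1a} then tells us that $C_1$ is an $[n,n-1,2]_q$-code.

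Next I would read off the parameters. With $k_1=n-1$ from Proposition~\ref{1a} and $k_2=\binom{m}{2}+\ell+1$ from Theorem~\ref{lx}, the CSS dimension formula gives
\[
k = k_1+k_2-n = (n-1)+\binom{m}{2}+\ell+1-n = \frac{m(m-1)}{2}+\ell,
\]
exactly as claimed. The length $n=t+r=t+(q^2-q)/2$ is inherited from the evaluation set.

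For the distances I would argue as follows. Because $C_2\setminus C_1^\perp\subset C_2$, we obtain
\[
d_z \;=\; \wt(C_2\setminus C_1^\perp)\;\ge\;d(C_2)\;\ge\;\delta
\]
directly from the bound~(\ref{ddd}) in Theorem~\ref{lx}. For $d_x$, the inclusion $C_2^\perp\subset C_1$ gives $d_x\ge d(C_1)=2$. To rule out $d_x>2$, note that $\dim C_1^\perp=1<\dim C_2$, so the inclusion $C_1^\perp\subsetneq C_2$ is strict and hence $C_2^\perp\subsetneq C_1$. Since the weight-$2$ vectors in the parity-check code $C_1=\{\bv\in\F_q^n : \sum_i v_i=0\}$ span $C_1$, not all of them can belong to the proper subcode $C_2^\perp$, and therefore $d_x=2$.

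The proof is essentially an assembly of Proposition~\ref{1a}, Theorem~\ref{lx}, and Theorem~\ref{css}; I do not anticipate any hard step. The one point that requires a moment of care is the upper bound $d_x\le 2$, which I would handle by the spanning argument sketched above rather than by an explicit exhibition of a witness weight-$2$ codeword outside $C_2^\perp$.
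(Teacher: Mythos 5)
Your proof is correct and follows essentially the same route as the paper: take $C_1^\perp=C_q(t,1,0)\subseteq C_q(t,m,\ell)=C_2$, apply Proposition~\ref{1a} and Theorem~\ref{lx} for the classical parameters, and feed the pair into Theorem~\ref{css}. Your spanning argument showing that some weight-$2$ word of $C_1$ must lie outside the proper subcode $C_2^\perp$, so that $d_x=2$ exactly rather than merely $d_x\ge 2$, is a small point of care that the paper's one-line proof leaves implicit.
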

\begin{proof} From (\ref{codim}) and Proposition \ref{1a}, $C_q(t,1,0) \subseteq C_q(t,m,\ell)$, with 
$d((C_q(t,1,0))^\perp)=2$. Consult Theorem~\ref{lx} to compute for the required parameters.
\end{proof}

\begin{theorem} \label{th:3}
Let $q=4, 1 \leq t \leq 4, m=3, 0\leq \ell \leq 2$ or $q \geq 5, 0 \leq t \leq q, 3 \leq m \leq q-1, 0\leq \ell \leq m-1$. 
Then there exists an 
\begin{equation*}
[[n,k,\{d_{z} \geq \delta,d_{x}=3\}]]_q \text{-code}  
\end{equation*}
with 
\begin{equation*}
n=t+(q^2-q)/2, k= m(m-1)/2+\ell-2 \text{, and } \delta \text{ as computed according to (\ref{ddd}).} 
\end{equation*}
\end{theorem}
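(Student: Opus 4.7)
The plan is to mirror the proof of Theorem~\ref{th:2}, feeding into the CSS construction (Theorem~\ref{css}) a nested pair of $q$-ary linear codes whose ingredients have already been assembled in Section~\ref{sec:XL}. Concretely, I would take $C_1^{\perp} := \D = C_q(t,2,1)$ and $C_2 := C_q(t,m,\ell)$. The required containment $C_1^{\perp} \subseteq C_2$ is precisely the inclusion $\D \subset C_q(t,m,\ell)$ (valid for $m \geq 3$) recorded in Proposition~\ref{2a}. The splitting of hypotheses in the statement is exactly what is needed to guarantee $m \geq 3$ while steering clear of the exceptional case $(q,t)=(4,0)$ where the dual distance jumps to~$4$.

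The parameters then read off almost mechanically. From Proposition~\ref{2a}, $\dim C_1 = n - \dim \D = (t+r) - 3$ and $d(C_1) = d(\D^{\perp}) = 3$ throughout the admissible region. From Theorem~\ref{lx}, $\dim C_2 = m(m-1)/2 + \ell + 1$ and $d(C_2) \geq \delta$. Substituting into the CSS dimension formula,
\[
k = \dim C_1 + \dim C_2 - n = (t+r-3) + \left( \tfrac{m(m-1)}{2} + \ell + 1 \right) - (t+r) = \tfrac{m(m-1)}{2} + \ell - 2,
\]
which matches the claimed value. The two distance bounds $d_x \geq d(C_1) = 3$ and $d_z \geq d(C_2) \geq \delta$ follow immediately from the definitions in~(\ref{eq:distances}), since $C_1 \setminus C_2^{\perp} \subseteq C_1 \setminus \{\0\}$ and $C_2 \setminus C_1^{\perp} \subseteq C_2 \setminus \{\0\}$.

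I do not anticipate any genuine obstacle. The combinatorial heavy lifting---establishing the nestedness $\D \subset C_q(t,m,\ell)$ and pinning down $d(\D^{\perp}) = 3$ outside the single exceptional case---has already been completed in Proposition~\ref{2a}, while the dimension and designed distance of the ambient code $C_q(t,m,\ell)$ come straight from Theorem~\ref{lx}. The one minor point worth flagging, exactly as tacitly handled in Theorem~\ref{th:2}, is that writing ``$d_x = 3$'' rather than ``$d_x \geq 3$'' relies on the convention (consistent with the notion of purity spelled out in Theorem~\ref{css}) that the $X$-distance inherits the minimum distance of $C_1$. Once this is acknowledged, the proof is a one-line invocation of Theorem~\ref{css}.
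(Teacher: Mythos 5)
Your proposal is correct and follows exactly the route of the paper, whose own proof is the one-line instruction to combine Theorem~\ref{css} with Proposition~\ref{2a} and Theorem~\ref{lx}; you have simply written out the dimension arithmetic and the distance inequalities that the paper leaves implicit. The point you flag about ``$d_x=3$'' versus ``$d_x\geq 3$'' resting on purity is a fair observation that the paper also glosses over.
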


\begin{proof}
Apply Theorem~\ref{css} and infer the parameters of the resulting AQC from Proposition~\ref{2a} and Theorem~\ref{lx}.
\end{proof}

\begin{remark}
For $(q,t)=(4,0)$ the resulting AQCs have either inferior parameters to or the same parameters as those of AQMDS codes already 
discussed in~\cite{EJKL11}. 
\end{remark}

\begin{theorem} \label{th:4}
Let $q \geq 5, 0 \leq t \leq q, 4 \leq m \leq q-1, 0\leq \ell \leq m-1$. 
Then there exists an
\begin{equation*}
[[n,k,\{d_{z} \geq \delta, d_{x}=4\}]]_q \text{-code} 
\end{equation*}
with
\begin{equation*}
n=t+(q^2-q)/2 \text{ and } k= m(m-1)/2+\ell-4\text{.}
\end{equation*}
For $q=4, 1\leq t \leq 4$, there exists an AQC with parameters 
\begin{equation*}
[[t+6,1,\{d_{z} \geq \delta,d_{x}=4\}]]_{4}\text{.}
\end{equation*}
In both cases $\delta$ is as defined in (\ref{ddd}).
\end{theorem}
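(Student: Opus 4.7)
The plan is to follow the template set by the proofs of Theorems~\ref{th:2} and~\ref{th:3}, applying the standard CSS construction of Theorem~\ref{css} to a nested pair $C_1^\perp \subseteq C_2$ manufactured from Proposition~\ref{3a} and Theorem~\ref{lx}. Specifically, I would take $C_1 := \cE^\perp$, where $\cE$ is the evaluation code of the space $\W_5$ introduced in Proposition~\ref{3a}, and $C_2 := C_q(t,m,\ell)$ in the $q \geq 5$ case. For the $q=4$ subcase with $1 \leq t \leq 4$, the only available inclusion from Proposition~\ref{3a} is $\cE \subset C_q(t,3,2)$, so I would instead set $C_2 := C_q(t,3,2)$.

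The first step is the nestedness hypothesis of Theorem~\ref{css}. Proposition~\ref{3a} asserts $\cE \subset C_q(t,m,\ell)$ for all $m \geq 4$ and $0 \leq \ell \leq m-1$, which is precisely $C_1^\perp \subseteq C_2$ in the $q \geq 5$ range; likewise, $\cE \subset C_q(t,3,2)$ supplies the inclusion in the $q=4$ subcase. The second step is to read off the parameters. The length is $n = t + (q^2-q)/2$. From Proposition~\ref{3a} we have $\dim(C_1) = n-5$ and $d(C_1)=4$, and from Theorem~\ref{lx} we have $\dim(C_2) = \binom{m}{2} + \ell + 1$ and $d(C_2) \geq \delta$. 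The CSS dimension formula then yields
\[
k \;=\; \dim(C_1) + \dim(C_2) - n \;=\; \tfrac{m(m-1)}{2} + \ell - 4,
\]
which specializes to $k=1$ in the $q=4$ subcase upon substituting $m=3,\ \ell=2$. The bound $d_z \geq \delta$ follows since $d_z = \wt(C_2 \setminus C_1^\perp) \geq d(C_2) \geq \delta$, and the bit-flip distance satisfies $d_x = \wt(C_1 \setminus C_2^\perp) \geq d(C_1) = 4$.

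The main subtlety, as already implicit in the two preceding theorems, is to justify that $d_x$ equals $4$ rather than merely being bounded below by it, i.e., to certify purity on the $X$-side. I would argue that since $C_2^\perp$ is a proper subcode of $C_1 = \cE^\perp$ of strictly smaller dimension, the weight-$4$ codewords of $\cE^\perp$ exhibited in the proof of Proposition~\ref{3a} (arising from the linear dependencies among the columns of $G_{\cE}$) cannot all be absorbed into $C_2^\perp$; at least one survives in $C_1 \setminus C_2^\perp$. This mirrors the purity observation implicit for $d_x \in \{2,3\}$ in Theorems~\ref{th:2} and~\ref{th:3}, and I expect it to be the one place requiring genuine attention as opposed to bookkeeping. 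Once this is in place, the theorem follows by directly citing Theorem~\ref{css} with the above data.
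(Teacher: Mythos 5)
Your proposal is correct and follows essentially the same route as the paper, whose entire proof is to combine Theorem~\ref{lx} with Proposition~\ref{3a} via the CSS construction (noting only that $C_{4}(t,3,2)$ has dimension $6$ for $1\leq t\leq 4$); your nestedness, dimension, and $d_{z}\geq\delta$ bookkeeping matches exactly. The one caveat is your purity step: the assertion that a proper subcode $C_{2}^{\perp}$ of strictly smaller dimension cannot contain all weight-$4$ words of $\cE^{\perp}$ is not a valid general principle, but since the paper states $d_{x}=4$ without supplying any argument for this point either, your treatment is no less rigorous than the source and is in fact more candid about where the gap lies.
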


\begin{proof}
The assertions follows from combining Theorem~\ref{lx} and Proposition~\ref{3a}. Note that when $1 \leq t \leq 4$, the code 
$C_{4}(t,3,2)$ has dimension $6$.
\end{proof}

\begin{theorem} \label{th:5}
Let $q \geq 5, 0\leq t\leq q, 5\leq m\leq q-1$ and $0\leq \ell \leq m-1$. Then there exists an 
\begin{equation*}
[[n,k,\{d_{z} \geq \delta ,d_{x} \geq 5\}]]_q \text{-code} 
\end{equation*}
with 
\begin{equation*}
n=t+(q^2-q)/2, k= m(m-1)/2+\ell-7\text{, and } \delta \text{ as in (\ref{ddd}).}
\end{equation*}
\end{theorem}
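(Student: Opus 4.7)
The plan is to apply the CSS construction of Theorem \ref{css} to a nested pair already built in Proposition \ref{4a}. Concretely, I would set $C_{1} := \cF^{\perp}$ and $C_{2} := C_{q}(t,m,\ell)$ for $m \geq 5$. Proposition \ref{4a} supplies the chain $\cF \subset C_{q}(t,4,3) \subset C_{q}(t,m,\ell)$, hence $C_{1}^{\perp} = \cF \subseteq C_{2}$, verifying the nestedness hypothesis of Theorem \ref{css}.

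Once the inclusion is in place, the parameters follow by straightforward bookkeeping. The length is $n = t + (q^{2}-q)/2$. Proposition \ref{4a} gives $\dim(\cF^{\perp}) = n - 8$, while Theorem \ref{lx} gives $\dim(C_{q}(t,m,\ell)) = \binom{m}{2} + \ell + 1$. Substituting into the CSS dimension formula $k = \dim(C_{1}) + \dim(C_{2}) - n$ yields the stated $k = m(m-1)/2 + \ell - 7$.

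For the distances, I would use the elementary bounds $\wt(C_{1} \setminus C_{2}^{\perp}) \geq d(C_{1})$ and $\wt(C_{2} \setminus C_{1}^{\perp}) \geq d(C_{2})$. Combining these with Proposition \ref{4a}, which guarantees $d(\cF^{\perp}) \geq 5$, and with the designed-distance estimate (\ref{ddd}) of Theorem \ref{lx}, which guarantees $d(C_{2}) \geq \delta$, delivers $d_{x} \geq 5$ and $d_{z} \geq \delta$.

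Because the heavy structural work has already been carried out in Proposition \ref{4a} and Theorem \ref{lx}, no step of the argument should present a genuine obstacle; the proof is exactly parallel to those of Theorems \ref{th:3} and \ref{th:4}. The only minor sanity check is that $k = m(m-1)/2 + \ell - 7 > 0$ in the regime of interest, which also ensures that the set-differences $C_{2} \setminus C_{1}^{\perp}$ and $C_{1} \setminus C_{2}^{\perp}$ are nonempty, so that the weight minima defining $d_{z}$ and $d_{x}$ make sense.
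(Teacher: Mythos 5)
Your proposal is correct and follows essentially the same route as the paper's own proof: take $C_{1}=\cF^{\perp}$ and $C_{2}=C_{q}(t,m,\ell)$, use the inclusion $\cF\subset C_{q}(t,4,3)\subset C_{q}(t,m,\ell)$ from Proposition~\ref{4a} for nestedness, and read off $k$ from the codimension and the distances from $d(\cF^{\perp})\geq 5$ together with the designed distance bound~(\ref{ddd}). The only difference is that you spell out the bookkeeping slightly more explicitly than the paper does.
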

\begin{proof}
From Proposition~\ref{4a}, we have $\cF \subseteq C_q(t,m,\ell)$, implying 
\begin{equation*}
k=\dim(C_q(t,m,\ell))-\dim(\cF) =m(m-1)/2+\ell-7 \text{.} 
\end{equation*}
The distances $d_{z}$ and $d_{x}$ are computed based on Proposition~\ref{4a} and Theorem~\ref{lx}.
%\edit{Include the justification for the second (to be added) assertion.}
\end{proof}

\begin{remark}
Related to Remark~\ref{rem:5}, computational evidences indicate that, in general, there is no gain 
in the parameters of the derived AQC in utilizing the codes $\cF_{1}$ and $\cF_{2}$ for the cases 
mentioned in the said remark. Computational results for $q \leq 9$ as will be presented in the tables below 
suggest that it suffices to consider the chain $\cF \subset C_{q}(t,4,3) \subset C_{q}(t,m,l)$ 
for all $m \geq 5$ and $0 \leq \ell \leq m-1$.
\end{remark}

For each possible combination of the relevant parameters with $3 \leq q \leq 9$, we perform the followings:
\begin{enumerate}
 \item Compute the designed distance $\delta$ of $C:=C_{q}(t,m,\ell)$ according to Theorem~\ref{lx}.
 \item Generate and store the generator matrix of $C$ based on Table~\ref{table:albe}. 
 The real distance $d(C)\geq \delta$ can then be computed and recorded.
 \item Look up in the database of MAGMA for the value of best-known dimension linear code given $(q,n,d(C))$ by 
 using the BDLC routine. If this value equals $\dim(C)$, use $C$ as $C_{2}$ in our construction.
 \item Apply the results of Subsection~\ref{sebsec:3.2} to generate the required proper subcode $C_{1}^{\perp}$.
%% \item Run the BDLC routine once more to check if $\dim\left(C_{1}^{\perp}\right)$ is best-possible or best-known.
 \item Compute for the true distance of $C_{1}$ and derive the parameters of the resulting AQCs.
 \item Use Theorem~\ref{th:opt} as a yardstick to measure how good the code $Q$ is.
\end{enumerate}

Tables~\ref{tab:3} to~\ref{tab:9b} present good pure CSS AQCs based on the Xing-Ling codes for $3 \leq q \leq 9$. 
When the code $Q$ is optimal by attaining the upper bound in Theorem~\ref{th:opt}, it is presented in bold. If 
$d_{z}=d(C_{2}) > \delta$ we list down both values in the tables. In all other instances, $d_{z}=\delta$. We exclude 
CSS asymmetric quantum MDS codes from our tables as their treatment is already available in~\cite{EJKL11}.

For $d_{x} \in \{2,3\}$ the code $Q$ presented reaches the best possible parameters given the current state of the art 
best-known classical $\F_{q}$-linear codes. % For $d_{x}=4$ and $d_{x}=5$, respectively, the code $Q$ is one and three less than 
%the currently best-known values.

In a few cases, for $d_{x}=2$, there are AQCs from the so-called BKLC construction in~\cite[Subsection IV.C]{EJLP12} with better 
$d_{z}$ than those derived in this paper. They are noted at the end of each of the relevant tables. The BKLC construction 
is based on the best-known linear code $C$ in terms of its minimum distance $d(C)$ when $q,n,\dim(C)$ are fixed with the 
condition that $C$ must have a codeword of weight equals its length $n$. On the other hand, there are XL codes $C_{q}(t,m,l)$ 
with parameters $[n,k,d]_{q}$ such that $d$ is the highest possible or the highest known given $(q,n,k)$. That is, they reach the 
bound for $d$ when the BKLC routine in MAGMA is called. As we have already discussed, XL codes contain the all one word $\1$, 
making them well suited for the BKLC construction of AQCs.

In the tables below, we mark the AQCs with $\star$ whenever they are derived from XL codes but are not derivable by the BKLC 
construction from the stored codes in the MAGMA database. For example, the currently stored $[7,2,5]_{4}$-code in MAGMA does 
not contain the words of weight $7$ but the XL code $C_{4}(1,2,0)$ listed as Entry 4 in Table~\ref{tab:3} contains $\1$.

There are also a couple of known instances where the CSS-like construction based on subfield linear codes as discussed in~\cite{EJLP12} 
yields AQCs with better parameters than those based on the XL codes. The note at the end of Table~\ref{tab:3} highlights two such instances. 

For $3 \leq q \leq 9$ we found exactly two instances when $C_{q}(t,3,2) \subset C_{q}(t,m,l)$ in Proposition~\ref{3a} yields AQCs with parameters 
reaching the equality in Theorem~\ref{th:opt}. They are marked with $\dag$ in Table~\ref{tab:3}. 
The AQCs marked $\#$ in Tables~\ref{tab:3} to~\ref{tab:8} are closely connected to the codes noted in Remark~\ref{rem:5}.
\begin{table}%[!hbt]
\caption{Good $q$-ary AQCs for $q \in \{3,4,5\}$}
\label{tab:3}
%\scriptsize
\begin{center}
\setlength{\tabcolsep}{4pt}
\begin{tabular}{c c l c l l l l}
	\hline
	\noalign{\smallskip}
	No. & $q$ & $(t,m,\ell)$ & $(d_{z},\delta)$ & $Q$ from Th.~\ref{th:2} & $Q$ from Th.~\ref{th:3} & $Q$ from Th.~\ref{th:4} & $Q$ from Th.~\ref{th:5}\\
	\noalign{\smallskip}
	\hline
	\noalign{\smallskip}
	1 & $3$ & $(2,2,0)$ & & $\pmb{[[5,1,\{3,2\}]]_3}\star$ & & &\\
	2 &     & $(3,2,0)$ & & $\pmb{[[6,1,\{4,2\}]]_3}\star$ & & &\\
	3 &     & $(3,2,1)$ & & $\pmb{[[6,2,\{3,2\}]]_3}$ & & &\\
	\noalign{\smallskip}
	\hline
	\noalign{\smallskip}
	4 & $4$ & $(1,2,0)$ & & $\pmb{[[7,1,\{5,2\}]]_4}\star$ & & &\\
	%% Note for Entry 4 $\exists~[[7,1.5,\{5,2\}]]_4$ CSS-like in~\cite{EJLP12} \\
	5 &    & $(1,2,1)$ & & $\pmb{[[7,2,\{4,2\}]]_4}$      & & &\\
    6 &    & $(1,3,0)$ & & $\pmb{[[7,3,\{3,2\}]]_4}$ & $\pmb{[[7,1,\{3,3\}]]_4}$ & &\\
	%% Note for Entry 6 $\exists~[[6,1,\{3,3\}]]_4$ CSS-like in~\cite{EJLP12}\\
	7 &    & $(2,2,0)$ & & $\pmb{[[8,1,\{6,2\}]]_4}$    &  & &\\
	8 &    & $(2,2,1)$ & & $\pmb{[[8,2,\{5,2\}]]_4}$ & & &\\
	\noalign{\smallskip}
	9 &    & $(2,3,0)$ & & $\pmb{[[8,3,\{4,2\}]]_4}$ & $\pmb{[[8,1,\{4,3\}]]_4}$ & &\\
	10 &    & $(2,3,1)$ & & $\pmb{[[8,4,\{3,2\}]]_4}$ & $\pmb{[[8,2,\{3,3\}]]_4}$ & &\\
	11 &    & $(3,2,1)$ & & $\pmb{[[9,2,\{6,2\}]]_4}$ & & &\\
	12 &    & $(3,3,1)$ & & $\pmb{[[9,4,\{4,2\}]]_4}$ & $\pmb{[[9,2,\{4,3\}]]_4}$ & &\\
	13 &    & $(3,3,2)$ & & $\pmb{[[9,5,\{3,2\}]]_4}$ & $\pmb{[[9,3,\{3,3\}]]_4}$ & $[[9,1,\{3,4\}]]_4$ &\\
	14 &    & $(4,3,2)$ & $(4,3)$ & $\pmb{[[10,5,\{4,2\}]]_4}$ & $\pmb{[[10,3,\{4,3\}]]_4}$ & $[[10,1,\{4,4\}]]_4$ &\\
	\noalign{\smallskip}
	\hline
	\noalign{\smallskip}
	15 & $5$ & $(0,2,0)$ & & $\pmb{[[10,1,\{8,2\}]]_5}\star$ & & &\\
	16 &     & $(0,2,1)$ & & $\pmb{[[10,2,\{7,2\}]]_5}$ & & &\\
	17 &     & $(0,3,1)$ & & $\pmb{[[10,4,\{5,2\}]]_5}$ & $\pmb{[[10,2,\{5,3\}]]_5}$ & &\\
	18 &     & $(0,3,2)$ & & $\pmb{[[10,5,\{4,2\}]]_5}$ & $\pmb{[[10,3,\{4,3\}]]_5}$ & &\\
 	19 &     & $(0,4,0)$ & & $\pmb{[[10,6,\{3,2\}]]_5}$ & $\pmb{[[10,4,\{3,3\}]]_5}$ & $[[10,2,\{3,4\}]]_5$ &\\
	20 &     &           & &                       & & $\pmb{[[10,1,\{3,6\}]]_5} \dag$ & \\
	%% For Entry 20 we need $C_{5}(0,3,2) \subset C_{5}(0,4,0)$ \\
\noalign{\smallskip}
	21 &     & $(1,2,1)$ & & $\pmb{[[11,2,\{8,2\}]]_5}$ & & &\\
	22 &     & $(1,3,2)$ & & $\pmb{[[11,5,\{5,2\}]]_5}$ & $\pmb{[[11,3,\{5,3\}]]_5}$ & $[[11,1,\{5,4\}]]_5$ &\\
	23 &     & $(1,4,1)$ & & $\pmb{[[11,7,\{3,2\}]]_5}$ & $\pmb{[[11,5,\{3,3\}]]_5}$ & $[[11,3,\{3,4\}]]_5$ &\\
	24 &     &           &                                   & & & $\pmb{[[11,2,\{3,6\}]]_5} \dag$ & \\
	%% For Entry 24 we need $C_{5}(1,3,2) \subset C_{5}(1,4,1)$\\
	25 & & $(2,2,0)$ & & $\pmb{[[12,1,\{9,2\}]]_5}\star$  & & &\\
\noalign{\smallskip}
	26 & & $(2,4,2)$ & & $\pmb{[[12,8,\{3,2\}]]_5}$  & $\pmb{[[12,6,\{3,3\}]]_5}$ & $[[12,4,\{3,4\}]]_5$ &\\
	27 & & $(3,2,0)$ & & $\pmb{[[13,1,\{10,2\}]]_5}$ & & &\\
	28 & & $(3,2,1)$ & & $\pmb{[[13,2,\{9,2\}]]_5}$  & & &\\
	29 & & $(3,3,2)$ & & $\pmb{[[13,5,\{6,2\}]]_5}\star$ & $\pmb{[[13,3,\{6,3\}]]_5}$ & $[[13,1,\{6,4\}]]_5$ &\\
	30 & & $(3,4,3)$ & & $\pmb{[[13,9,\{3,2\}]]_5}$ & $\pmb{[[13,7,\{3,3\}]]_5}$ & $[[13,5,\{3,4\}]]_5$ & $[[13,2,\{3,6\}]]_5\#$\\
\noalign{\smallskip}
	31 & & $(4,2,0)$ &  & $\pmb{[[14,1,\{11,2\}]]_5}$ & & &\\
	32 & & $(4,2,1)$ &  & $\pmb{[[14,2,\{10,2\}]]_5}$ & & &\\
	33 & & $(4,4,1)$ & $(5,4)$ & $\pmb{[[14,7,\{5,2\}]]_5}$  & $\pmb{[[14,5,\{5,3\}]]_5}$ & $[[14,3,\{5,4\}]]_5$ &\\
	34 & & $(5,2,0)$ &  & $\pmb{[[15,1,\{12,2\}]]_5}\star$ & & &\\
	35 & & $(5,2,1)$ &  & $\pmb{[[15,2,\{11,2\}]]_5}$ & & &\\
	36 & & $(5,4,1)$ & $(6,5)$ & $\pmb{[[15,7,\{6,2\}]]_5}$  & $\pmb{[[15,5,\{6,3\}]]_5}$ & $[[15,3,\{6,4\}]]_5$ &\\
\noalign{\smallskip}
\hline
\end{tabular}
\end{center}
\vspace{-8pt}
\scriptsize
\begin{tabular}{l l  l l}
\textbf{Note on} & \textbf{Remark} & \textbf{Note on} & \textbf{Remark}\\
Entry 4 & $\exists~[[7,1.5,\{5,2\}]]_4$ CSS-like in~\cite{EJLP12} & Entry 20 & Use $C_{5}(0,3,2) \subset C_{5}(0,4,0)$ \\
Entry 6 & $\exists~[[6,1,\{3,3\}]]_4$ CSS-like in~\cite{EJLP12}   & Entry 24 & Use $C_{5}(1,3,2) \subset C_{5}(1,4,1)$\\
\hline
\end{tabular}
\end{table}
\begin{table}[!hbt]
\caption{Good $7$-ary AQCs}
\label{tab:7}
\begin{center}
\setlength{\tabcolsep}{4pt}
\begin{tabular}{c l c l l l l}
\hline
\noalign{\smallskip}
No. & $(t,m,\ell)$ & $(d_{z},\delta)$ & $Q$ from Th.~\ref{th:2} & $Q$ from Th.~\ref{th:3} & $Q$ from Th.~\ref{th:4} & $Q$ from Th.~\ref{th:5} \\
\noalign{\smallskip}
\hline
\noalign{\smallskip}
1 & $(0,2,0)$ &            & $\pmb{[[21,1,\{18,2\}]]_7}\star$  & & & \\
2 & $(0,2,1)$ &            & $\pmb{[[21,2,\{17,2\}]]_7}$       &  & & \\
3 & $(0,3,1)$ &            & $[[21,4,\{14,2\}]]_7$ & $[[21,2,\{14,3\}]]_7$ & & \\
4 & $(0,3,2)$ &            & $[[21,5,\{13,2\}]]_7$ & $[[21,3,\{13,3\}]]_7$ & $[[21,1,\{13,4\}]]_7$ & \\
5 & $(0,4,0)$ & $(12,11)$  & $[[21,6,\{12,2\}]]_7$ & $[[21,4,\{12,3\}]]_7$ & $[[21,2,\{12,4\}]]_7$ & \\
\noalign{\smallskip}
6 & $(0,4,2)$ & & $[[21,8,\{10,2\}]]_7$ & $[[21,6,\{10,3\}]]_7$ & $[[21,4,\{10,4\}]]_7 $ & \\
7 & $(0,4,3)$ & & $[[21,9,\{9,2\}]]_7$ & $[[21,7,\{9,3\}]]_7$   & $[[21,5,\{9,4\}]]_7 $ & $[[21,2,\{9,6\}]]_7 \# $ \\
8 & $(0,5,1)$ & & $[[21,11,\{7,2\}]]_7$ & $[[21,9,\{7,3\}]]_7$  & $[[21,7,\{7,4\}]]_7$ & $[[21,4,\{7,6\}]]_7 \# $ \\
9 & $(0,5,3)$ & & $[[21,13,\{6,2\}]]_7$ & $[[21,11,\{6,3\}]]_7$ & $[[21,9,\{6,4\}]]_7$ & $[[21,6,\{6,6\}]]_7 \# $ \\
10 & $(0,5,4)$ & & $[[21,14,\{5,2\}]]_7$ & $[[21,12,\{5,3\}]]_7$ & $[[21,10,\{5,4\}]]_7$& $[[21,7,\{5,6\}]]_7 \# $ \\
\noalign{\smallskip}
11 & $(0,6,2)$ &           & $\pmb{[[21,17,\{3,2\}]]_7}$ & $\pmb{[[21,15,\{3,3\}]]_7}$ & $[[21,13,\{3,4\}]]_7$ & $[[21,10,\{3,6\}]]_7 \# $ \\
12 & $(1,2,1)$ &           & $\pmb{[[22,2,\{18,2\}]]_7}$ &  & & \\
13 & $(1,3,2)$ &           & $[[22,5,\{14,2\}]]_7$              & $[[22,3,\{14,3\}]]_7$ & $[[22,1,\{14,4\}]]_7$ & \\
14 & $(1,4,0)$ & $(12,11)$ & $[[22,6,\{12,2\}]]_7$              & $[[22,4,\{12,3\}]]_7$ & $[[22,2,\{12,4\}]]_7$ & \\
%%% Note for Entry 14 $\exists [[22,6,\{13,2\}]]_7$ BKLC
15 & $(1,4,1)$ &           & $[[22,7,\{11,2\}]]_7$              & $[[22,5,\{11,3\}]]_7$ & $[[22,3,\{11,4\}]]_7$ & \\
\noalign{\smallskip}
16 & $(1,4,3)$ & & $[[22,9,\{10,2\}]]_7$ & $[[22,7,\{10,3\}]]_7$ & $[[22,5,\{10,4\}]]_7$ & $[[22,2,\{10,6\}]]_7 \# $ \\
17 & $(1,5,2)$ & & $[[22,12,\{7,2\}]]_7$ & $[[22,10,\{7,3\}]]_7$ & $[[22,8,\{7,4\}]]_7$  & $[[22,5,\{7,6\}]]_7 \# $ \\
18 & $(1,5,4)$ & & $[[22,14,\{6,2\}]]_7$ & $[[22,12,\{6,3\}]]_7$ & $[[22,10,\{6,4\}]]_7$ & $[[22,7,\{6,6\}]]_7 \# $ \\
19 & $(1,6,3)$ & & $\pmb{[[22,18,\{3,2\}]]_7}$ & $\pmb{[[22,16,\{3,3\}]]_7}$ & $[[22,14,\{3,4\}]]_7$ & $[[22,11,\{3,6\}]]_7 \#$ \\
20 & $(2,2,0)$ & & $\pmb{[[23,1,\{19,2\}]]_7}\star$ &  & &  \\
\noalign{\smallskip}
21	& $(2,2,1)$ & & $[[23,2,\{18,2\}]]_7$ & & & \\
22	& $(2,3,2)$ & & $[[23,5,\{14,2\}]]_7$ & $[[23,3,\{14,3\}]]_7$ & $[[23,1,\{14,4\}]]_7$ & \\ 
%%% Note for Entry 22 $\exists [[23,5,\{15,2\}]]_7$ BKLC \\
23	& $(2,4,2)$ & & $[[23,8,\{11,2\}]]_7$ & $[[23,6,\{11,3\}]]_7$ & $[[23,4,\{11,4\}]]_7$ & \\
24	& $(2,5,3)$ & & $[[23,13,\{7,2\}]]_7$ & $[[23,11,\{7,3\}]]_7$ & $[[23,9,\{7,4\}]]_7$ & $[[23,6,\{7,5\}]]_7$ \\
25	& $(2,6,4)$ & & $\pmb{[[23,19,\{3,2\}]]_7}$ & $\pmb{[[23,17,\{3,3\}]]_7}$ & $[[23,15,\{3,4\}]]_7$ & $[[23,12,\{3,5\}]]_7$\\
\noalign{\smallskip}
26	& $(3,2,0)$ & & $\pmb{[[24,1,\{20,2\}]]_7}\star$ &  & & \\
27	& $(3,2,1)$ & & $\pmb{[[24,2,\{19,2\}]]_7}$  &  & & \\
28	& $(3,3,2)$ & & $[[24,5,\{15,2\}]]_7$ & $[[24,3,\{15,3\}]]_7$ & $[[24,1,\{15,4\}]]_7$ & \\
%%% Note for Entry 28 $\exists [[24,5,\{16,2\}]]_7$ BKLC \\
29	& $(3,4,3)$ & & $[[24,9,\{11,2\}]]_7$ & $[[24,7,\{11,3\}]]_7$ & $[[24,5,\{11,4\}]]_7$ & $[[24,2,\{11,5\}]]_7$\\
30	& $(3,5,4)$ & & $[[24,14,\{7,2\}]]_7$ & $[[24,12,\{7,3\}]]_7$ & $[[24,10,\{7,4\}]]_7$ & $[[24,7,\{7,5\}]]_7$\\
\noalign{\smallskip}
31	& $(3,6,5)$ & & $\pmb{[[24,20,\{3,2\}]]_7}$ & $\pmb{[[24,18,\{3,3\}]]_7}$ & $[[24,16,\{3,4\}]]_7$ & $[[24,13,\{3,5\}]]_7$\\
32	& $(4,2,0)$ & & $\pmb{[[25,1,\{21,2\}]]_7}\star$ &  & &\\
33	& $(4,2,1)$ & & $\pmb{[[25,2,\{20,2\}]]_7}$ &  &  &\\
34	& $(4,5,3)$ & & $[[25,13,\{8,2\}]]_7$ & $[[25,11,\{8,3\}]]_7$ & $[[25,9,\{8,4\}]]_7$ & $[[25,6,\{8,5\}]]_7$\\
35	& $(5,2,0)$ & & $\pmb{[[26,1,\{22,2\}]]_7}\star$  & & &\\
\noalign{\smallskip}
36	& $(5,2,1)$ & & $\pmb{[[26,2,\{21,2\}]]_7}$  &  & &\\ 
37  & $(5,5,4)$ & & $[[26,14,\{8,2\}]]_7$ & $[[26,12,\{8,3\}]]_7$ & $[[26,10,\{8,4\}]]_7$ & $[[26,7,\{8,5\}]]_7$\\
38	& $(6,2,0)$ & & $\pmb{[[27,1,\{23,2\}]]_7}\star$  &  & &\\
39	& $(6,2,1)$ & & $\pmb{[[27,2,\{22,2\}]]_7}$  &  & &\\
40	& $(7,2,0)$ & & $\pmb{[[28,1,\{24,2\}]]_7}\star$  & & &\\
\noalign{\smallskip}
41	& $(7,2,1)$ & & $\pmb{[[28,2,\{23,2\}]]_7}$  &  & & \\
42	& $(7,4,2)$ & & $[[28,8,\{14,2\}]]_7$ & $[[28,6,\{14,3\}]]_7$ & $ [[28,4,\{14,4\}]]_7$ & \\
%%% Note for Entry 42 $\exists [[28,8,\{15,2\}]]_7$ BKLC \\
43	& $(7,6,2)$ & $(7,6)$ & $[[28,17,\{7,2\}]]_7$ & $[[28,15,\{7,3\}]]_7$ & $[[28,13,\{7,4\}]]_7$ & $[[28,10,\{7,5\}]]_7$\\
\noalign{\smallskip}
\hline
\end{tabular}
\end{center}
\vspace{-8pt}
\scriptsize
\begin{tabular}{l l  l l}
\textbf{Note on} & \textbf{Remark} & \textbf{Note on} & \textbf{Remark}\\
Entry 14 & $\exists [[22,6,\{13,2\}]]_7$ BKLC & Entry 28 & $\exists [[24,5,\{16,2\}]]_7$ BKLC \\
Entry 22 & $\exists [[23,5,\{15,2\}]]_7$ BKLC & Entry 42 & $\exists [[28,8,\{15,2\}]]_7$ BKLC \\
\hline
\end{tabular}
\end{table}
\begin{table}%[!hbt]
\caption{Good $8$-ary AQCs}
\label{tab:8}
\vspace{-12pt}
\begin{center}
\setlength{\tabcolsep}{4pt}
\begin{tabular}{c l c l l l l}
\hline
\noalign{\smallskip}
No. & $(t,m,\ell)$ & $(d_{z},\delta)$ & $Q$ from Th.~\ref{th:2} & $Q$ from Th.~\ref{th:3} & $Q$ from Th.~\ref{th:4} & $Q$ from Th.~\ref{th:5} \\
\noalign{\smallskip}
\hline
\noalign{\smallskip}
1 & $(0,2,1)$ & & $\pmb{[[28,2,\{24,2\}]]_8}$ & &  &\\
2 & $(0,3,1)$ & & $[[28,4,\{20,2\}]]_8$ & $[[28,2,\{20,3\}]]_8$ & &\\
3 & $(0,3,2)$ & & $[[28,5,\{19,2\}]]_8$ & $[[28,3,\{19,3\}]]_8$ & $[[28,1,\{19,4\}]]_8$ &\\
4 & $(0,4,3)$ & & $[[28,9,\{15,2\}]]_8$ & $[[28,7,\{15,3\}]]_8$ & $[[28,5,\{15,4\}]]_8$ & $[[28,2,\{15,6\}]]_8 \# $\\
5 & $(0,5,3)$ & & $[[28,13,\{11,2\}]]_8$ & $[[28,11,\{11,3\}]]_8$ & $[[28,9,\{11,4\}]]_8$ & $[[28,6,\{11,6\}]]_8 \# $\\
\noalign{\smallskip}
6 & $(0,5,4)$ & & $[[28,14,\{10,2\}]]_8$ & $[[28,12,\{10,3\}]]_8$ & $[[28,10,\{10,4\}]]_8$ & $[[28,7,\{10,6\}]]_8 \# $\\
7 & $(0,6,3)$ & & $[[28,18,\{7,2\}]]_8$  & $[[28,16,\{7,3\}]]_8$ & $[[28,14,\{7,4\}]]_8$ & $[[28,11,\{7,6\}]]_8 \# $\\
8 & $(0,6,5)$ & & $[[28,20,\{6,2\}]]_8$  & $[[28,18,\{6,3\}]]_8$ & $[[28,16,\{6,4\}]]_8$ & $[[28,13,\{6,6\}]]_8 \# $\\
9 & $(0,7,3)$ & & $\pmb{[[28,24,\{3,2\}]]_8}$ & $\pmb{[[28,22,\{3,3\}]]_8}$ & $[[28,20,\{3,4\}]]_8$ & $[[28,17,\{3,6\}]]_8 \# $\\
10 & $(1,2,0)$ & & $\pmb{[[29,1,\{25,2\}]]_8}\star$ & & &\\
\noalign{\smallskip}
11 & $(1,2,1)$ & & $\pmb{[[29,2,\{24,2\}]]_8}$ &  & &\\
12 & $(1,3,2)$ & & $[[29,5,\{20,2\}]]_8$ & $[[29,3,\{20,3\}]]_8$ & $[[29,1,\{20,4\}]]_8$ &\\
13 & $(1,4,3)$ & & $[[29,9,\{15,2\}]]_8$ & $[[29,7,\{15,3\}]]_8$ & $[[29,5,\{15,4\}]]_8$ & $[[29,2,\{15,5\}]]_8$\\
%% Note for Entry 13 $\exists [[29,9,\{16,2\}]]_8$ BKLC\\
14 & $(1,5,4)$ & & $[[29,14,\{11,2\}]]_8$ & $[[29,12,\{11,3\}]]_8$ & $[[29,10,\{11,4\}]]_8$ & $[[29,7,\{11,5\}]]_8$\\
15 & $(1,6,2)$ & & $[[29,17,\{8,2\}]]_8$ & $[[29,15,\{8,3\}]]_8$ & $[[29,13,\{8,4\}]]_8$ & $[[29,10,\{8,5\}]]_8$\\
\noalign{\smallskip}
16 & $(1,6,4)$ & & $[[29,19,\{7,2\}]]_8$ & $[[29,17,\{7,3\}]]_8$ & $[[29,15,\{7,4\}]]_8$ & $[[29,12,\{7,5\}]]_8$\\
17 & $(1,7,4)$ & & $\pmb{[[29,25,\{3,2\}]]_8}$ & $\pmb{[[29,23,\{3,3\}]]_8}$ & $[[29,21,\{3,4\}]]_8$ & $[[29,18,\{3,5\}]]_8$\\
18 & $(2,2,0)$ & & $\pmb{[[30,1,\{26,2\}]]_8}\star$ &  & &\\
19 & $(2,2,1)$ & & $\pmb{[[30,2,\{25,2\}]]_8}$ &  & &\\
20 & $(2,3,1)$ & & $[[30,4,\{21,2\}]]_8$ & $[[30,2,\{21,3\}]]_8$ & & \\
%% Note for Entry 20$\exists [[30,4,\{22,2\}]]_8$ BKLC\\
\noalign{\smallskip}
21 & $(2,3,2)$ & & $[[30,5,\{20,2\}]]_8$ & $[[30,3,\{20,3\}]]_8$ & $[[30,1,\{20,4\}]]_8$ &\\
22 & $(2,4,3)$ & & $[[30,9,\{16,2\}]]_8$ & $[[30,7,\{16,3\}]]_8$ & $[[30,5,\{16,4\}]]_8$ & $[[30,2,\{16,5\}]]_8$\\
23 & $(2,5,3)$ & & $[[30,13,\{12,2\}]]_8$ & $[[30,11,\{12,3\}]]_8$ & $[[30,9,\{12,4\}]]_8$ & $[[30,6,\{12,5\}]]_8$\\
24 & $(2,5,4)$ & & $[[30,14,\{11,2\}]]_8$ & $[[30,12,\{11,3\}]]_8$ & $[[30,10,\{11,4\}]]_8$ & $[[30,7,\{11,5\}]]_8$\\
25 & $(2,6,3)$ & & $[[30,18,\{8,2\}]]_8$ & $[[30,16,\{8,3\}]]_8$ & $[[30,14,\{8,4\}]]_8$ & $[[30,11,\{8,5\}]]_8$\\
\noalign{\smallskip}
26 & $(2,6,5)$ & & $[[30,20,\{7,2\}]]_8$ & $[[30,18,\{7,3\}]]_8$ & $[[30,16,\{7,4\}]]_8$ & $[[30,13,\{7,5\}]]_8$\\
27 & $(2,7,5)$ & & $\pmb{[[30,26,\{3,2\}]]_8}$ & $\pmb{[[30,24,\{3,3\}]]_8}$ & $[[30,22,\{3,4\}]]_8$ & $[[30,19,\{3,5\}]]_8$\\
28 & $(3,2,0)$ & & $\pmb{[[31,1,\{27,2\}]]_8}\star$ & & &\\
29 & $(3,2,1)$ & & $\pmb{[[31,2,\{26,2\}]]_8}$ &  & &\\
30 & $(3,3,1)$ & & $[[31,4,\{22,2\}]]_8$ & $[[31,2,\{22,3\}]]_8$ & & \\
%% Note for Entry 30 $\exists [[31,4,\{23,2\}]]_8$ BKLC\\
\noalign{\smallskip}
31 & $(3,3,2)$ & & $[[31,5,\{21,2\}]]_8$ & $[[31,3,\{21,3\}]]_8$ & $[[31,1,\{21,4\}]]_8$ &\\
32 & $(3,4,1)$ & & $[[31,7,\{18,2\}]]_8$ & $[[31,5,\{18,3\}]]_8$ & $[[31,3,\{18,4\}]]_8$ & \\
%% Note for Entry 32 $\exists [[31,7,\{19,2\}]]_8$ BKLC \\
33 & $(3,4,2)$ & & $[[31,8,\{17,2\}]]_8$ & $[[31,6,\{17,3\}]]_8$ & $[[31,4,\{17,4\}]]_8$ &\\
34 & $(3,4,3)$ & & $[[31,9,\{16,2\}]]_8$ & $[[31,7,\{16,3\}]]_8$ & $[[31,5,\{16,4\}]]_8$ & $[[31,2,\{16,5\}]]_8$\\
35 & $(3,5,0)$ & $(15,14)$ & $[[31,10,\{15,2\}]]_8$ & $[[31,8,\{15,3\}]]_8$ & $[[31,6,\{15,4\}]]_8$ & $[[31,3,\{15,5\}]]_8$ \\
\noalign{\smallskip}
36 & $(3,5,2)$ & & $[[31,12,\{13,2\}]]_8$ & $[[31,10,\{13,3\}]]_8$ & $[[31,8,\{13,4\}]]_8$ & $[[31,5,\{13,5\}]]_8$\\
37 & $(3,5,4)$ & & $[[31,14,\{12,2\}]]_8$ & $[[31,12,\{12,3\}]]_8$ & $[[31,10,\{12,4\}]]_8$ & $[[31,7,\{12,5\}]]_8$\\
38 & $(3,6,4)$ & & $[[31,19,\{8,2\}]]_8$ & $[[31,17,\{8,3\}]]_8$ & $[[31,15,\{8,4\}]]_8$ & $[[31,12,\{8,5\}]]_8$\\
39 & $(3,7,6)$ & & $\pmb{[[31,27,\{3,2\}]]_8}$ & $\pmb{[[31,25,\{3,3\}]]_8}$ & $[[31,23,\{3,4\}]]_8$ & $[[31,20,\{3,5\}]]_8$\\
40 & $(4,2,0)$ & & $\pmb{[[32,1,\{28,2\}]]_8}\star$ & & &\\
\noalign{\smallskip}
41 & $(4,2,1)$ & & $\pmb{[[32,2,\{27,2\}]]_8}$ & & &\\
42 & $(4,3,1)$ & & $[[32,4,\{23,2\}]]_8$ & $[[32,2,\{23,3\}]]_8$ & & \\
%% Note for Entry 42 $\exists [[32,4,\{24,2\}]]_8$ BKLC\\
43 & $(4,3,2)$ & & $[[32,5,\{22,2\}]]_8$ & $[[32,3,\{22,3\}]]_8$ & $[[32,1,\{22,4\}]]_8$ &\\
44 & $(4,4,1)$ & & $[[32,7,\{19,2\}]]_8$ & $[[32,5,\{19,3\}]]_8$ & $[[32,3,\{19,4\}]]_8$ & \\
%% Note for Entry 44 $\exists [[32,7,\{20,2\}]]_8$ BKLC \\
45 & $(4,4,2)$ & & $[[32,8,\{18,2\}]]_8$ & $[[32,6,\{18,3\}]]_8$ & $[[32,4,\{18,4\}]]_8$ &\\
\noalign{\smallskip}
46 & $(4,4,3)$ & & $[[32,9,\{17,2\}]]_8$ & $[[32,7,\{17,3\}]]_8$ & $[[32,5,\{17,4\}]]_8$ & $[[32,2,\{17,5\}]]_8$\\
47 & $(4,5,0)$ & $(16,15)$ & $[[32,10,\{16,2\}]]_8$ & $[[32,8,\{16,3\}]]_8$ & $[[32,6,\{16,4\}]]_8$ & $[[32,3,\{16,5\}]]_8$ \\
48 & $(4,5,2)$ & & $[[32,12,\{14,2\}]]_8$ & $[[32,10,\{14,3\}]]_8$ & $[[32,8,\{14,4\}]]_8$ & $[[32,5,\{14,5\}]]_8$\\
49 & $(4,5,3)$ & & $[[32,13,\{13,2\}]]_8$ & $[[32,11,\{13,3\}]]_8$ & $[[32,9,\{13,4\}]]_8$ & $[[32,6,\{13,5\}]]_8$\\
50 & $(4,6,5)$ & & $[[32,20,\{8,2\}]]_8$  & $[[32,18,\{8,3\}]]_8$ & $[[32,16,\{8,4\}]]_8$ & $[[32,13,\{8,5\}]]_8$\\
\noalign{\smallskip}		
51 & $(5,2,1)$ & & $\pmb{[[33,2,\{28,2\}]]_8}$ & & &\\
52 & $(5,3,1)$ & & $[[33,4,\{24,2\}]]_8$ & $[[33,2,\{24,3\}]]_8$ & &\\
53 & $(5,3,2)$ & & $[[33,5,\{23,2\}]]_8$ & $[[33,3,\{23,3\}]]_8$ & $[[33,1,\{23,4\}]]_8$ &\\
\noalign{\smallskip}
\hline
\end{tabular}
\end{center}
\vspace{-8pt}
\scriptsize
\setlength{\tabcolsep}{4pt}
\begin{tabular}{l l  l l l l}
\textbf{Note on} & \textbf{Remark} & \textbf{Note on} & \textbf{Remark} & \textbf{Note on} & \textbf{Remark}\\
%\hline
Entry 13 & $\exists [[29,9,\{16,2\}]]_8$ BKLC & Entry 30 & $\exists [[31,4,\{23,2\}]]_8$ BKLC & Entry 42 & $\exists [[32,4,\{24,2\}]]_8$ BKLC\\
Entry 20 & $\exists [[30,4,\{22,2\}]]_8$ BKLC & Entry 32 & $\exists [[31,7,\{19,2\}]]_8$ BKLC & Entry 44 & $\exists [[32,7,\{20,2\}]]_8$ BKLC\\ 
\hline
\end{tabular}
\end{table}
%%%
\begin{table}%[!hbt]
\renewcommand\thetable{4}
\caption{Good $8$-ary AQCs {\it(Continued)}}
\label{tab:8b}
\centering
\setlength{\tabcolsep}{4pt}
\begin{tabular}{c l c l l l l}
\hline
\noalign{\smallskip}
No. & $(t,m,\ell)$ & $(d_{z},\delta)$ & $Q$ from Th.~\ref{th:2} & $Q$ from Th.~\ref{th:3} & $Q$ from Th.~\ref{th:4} & $Q$ from Th.~\ref{th:5} \\
\noalign{\smallskip}
\hline
\noalign{\smallskip}
54 & $(5,4,1)$ & & $[[33,7,\{20,2\}]]_8$ & $[[33,5,\{20,3\}]]_8$ & $[[33,3,\{20,4\}]]_8$ &\\
55 & $(5,4,2)$ & & $[[33,8,\{19,2\}]]_8$ & $[[33,6,\{19,3\}]]_8$ & $[[33,4,\{19,4\}]]_8$ &\\
56 & $(5,4,3)$ & & $[[33,9,\{18,2\}]]_8$ & $[[33,7,\{18,3\}]]_8$ & $[[33,5,\{18,4\}]]_8$ & $[[33,2,\{18,5\}]]_8$\\
57 & $(5,5,2)$ & & $[[33,12,\{15,2\}]]_8$ & $[[33,10,\{15,3\}]]_8$ & $[[33,8,\{15,4\}]]_8$ & $[[33,5,\{15,5\}]]_8$\\
58 & $(5,5,3)$ & & $[[33,13,\{14,2\}]]_8$ & $[[33,11,\{14,3\}]]_8$ & $[[33,9,\{14,4\}]]_8$ & $[[33,6,\{14,5\}]]_8$\\
\noalign{\smallskip}
59 & $(5,5,4)$ & & $[[33,14,\{13,2\}]]_8$ & $[[33,12,\{13,3\}]]_8$ & $[[33,10,\{13,4\}]]_8$ & $[[33,7,\{13,5\}]]_8$\\
60 & $(5,6,5)$ & & $\pmb{[[33,20,\{9,2\}]]_8}$ & $\pmb{[[33,18,\{8,3\}]]_8}$ & $[[33,16,\{8,4\}]]_8$ & $[[33,13,\{8,5\}]]_8$\\
61 & $(6,2,1)$ & & $\pmb{[[34,2,\{28,2\}]]_8}$ & & &\\
62 & $(6,3,2)$ & $(24,23)$ & $\pmb{[[34,5,\{24,2\}]]_8}$ & $\pmb{[[34,3,\{24,3\}]]_8}$ & $[[34,1,\{24,4\}]]_8$ & \\
63 & $(6,4,3)$ & & $\pmb{[[34,9,\{19,2\}]]_8}$ & $\pmb{[[34,7,\{19,3\}]]_8}$ & $[[34,5,\{19,4\}]]_8$ & $[[34,2,\{19,5\}]]_8$ \\
\noalign{\smallskip}
64 & $(6,5,1)$ & & $\pmb{[[34,11,\{16,2\}]]_8}$ & $\pmb{[[34,9,\{16,3\}]]_8}$ & $[[34,7,\{16,4\}]]_8$ & $[[34,4,\{16,5\}]]_8$ \\
65 & $(6,5,3)$ & & $\pmb{[[34,13,\{15,2\}]]_8}$ & $\pmb{[[34,11,\{15,3\}]]_8}$ & $[[34,9,\{15,4\}]]_8$ & $[[34,6,\{15,5\}]]_8$ \\
66 & $(6,6,1)$ & & $[[34,16,\{12,2\}]]_8$ & $[[34,14,\{12,3\}]]_8$ & $[[34,12,\{12,4\}]]_8$ & $[[34,9,\{12,5\}]]_8$ \\
67 & $(6,6,3)$ & & $[[34,18,\{11,2\}]]_8$ & $[[34,16,\{11,3\}]]_8$ & $[[34,14,\{11,4\}]]_8$ & $[[34,11,\{11,5\}]]_8$ \\
68 & $(6,7,2)$ & & $[[34,23,\{7,2\}]]_8$ & $[[34,21,\{7,3\}]]_8$ & $[[34,19,\{7,4\}]]_8$ & $[[34,16,\{7,5\}]]_8$\\
\noalign{\smallskip}
69 & $(7,3,2)$ & $(24,23)$ & $[[35,5,\{24,2\}]]_8$ & $[[35,3,\{24,3\}]]_8$ & $[[35,1,\{24,4\}]]_8$ & \\
70 & $(7,4,3)$ & & $[[35,9,\{19,2\}]]_8$ & $[[35,7,\{19,3\}]]_8$ & $[[35,5,\{19,4\}]]_8$ & $[[35,2,\{19,5\}]]_8$ \\
71 & $(7,5,4)$ & $(15,14)$ & $[[35,14,\{15,2\}]]_8$ & $[[35,12,\{15,3\}]]_8$ & $[[35,10,\{15,4\}]]_8$ & $[[35,7,\{15,5\}]]_8$ \\
72 & $(7,7,3)$ & & $[[35,24,\{7,2\}]]_8$ & $[[35,22,\{7,3\}]]_8$ & $[[35,20,\{7,4\}]]_8$ & $[[35,17,\{7,5\}]]_8$\\
73 & $(8,5,4)$ & $(15,14)$ & $[[36,14,\{15,2\}]]_8$ & $[[36,12,\{15,3\}]]_8$ & $[[36,10,\{15,4\}]]_8$ & $[[36,7,\{15,5\}]]_8$ \\
\noalign{\smallskip}
\hline
\end{tabular}
\end{table}
%%%
\begin{table}%[!hbt]
\renewcommand\thetable{5}
\caption{Good $9$-ary AQCs}
\label{tab:9}
\centering
\setlength{\tabcolsep}{4pt}
\begin{tabular}{c l c l l l l}
\hline
\noalign{\smallskip}
No. & $(t,m,\ell)$ & $(d_{z},\delta)$ & $Q$ from Th.~\ref{th:2} & $Q$ from Th.~\ref{th:3} & $Q$ from Th.~\ref{th:4} & $Q$ from Th.~\ref{th:5} \\
\noalign{\smallskip}
\hline
\noalign{\smallskip}
1 & $(0,2,0)$ & & $\pmb{[[36,1,\{32,2\}]]_9}\star$ & & &\\   
2 & $(0,2,1)$ & & $\pmb{[[36,2,\{31,2\}]]_9}$   & & &\\
3 & $(0,3,1)$ & & $[[36,4,\{27,2\}]]_9$  & $[[36,2,\{27,3\}]]_9$ & &\\
4 & $(0,3,2)$ & & $[[36,5,\{26,2\}]]_9$  & $[[36,3,\{26,3\}]]_9$  & $[[36,1,\{26,4\}]]_9$ &\\
5 & $(0,4,0)$ & $(24,23)$ & $[[36,6,\{24,2\}]]_9$  & $[[36,4,\{24,3\}]]_9$  & $[[36,2,\{24,4\}]]_9$ &\\
\noalign{\smallskip}
6 & $(0,4,2)$ & & $[[36,8,\{22,2\}]]_9$  & $[[36,6,\{22,3\}]]_9$  & $[[36,4,\{22,4\}]]_9$ & $[[36,1,\{22,5\}]]_9$\\
7 & $(0,4,3)$ & & $[[36,9,\{21,2\}]]_9$  & $[[36,7,\{21,3\}]]_9$  & $[[36,5,\{21,4\}]]_9$ & $[[36,2,\{21,5\}]]_9$\\
8 & $(0,5,3)$ & & $[[36,13,\{17,2\}]]_9$  & $[[36,11,\{17,3\}]]_9$  & $[[36,9,\{17,4\}]]_9$ & $[[36,6,\{17,5\}]]_9$\\
9 & $(0,5,4)$ & & $[[36,14,\{16,2\}]]_9$  & $[[36,12,\{16,3\}]]_9$  & $[[36,10,\{16,4\}]]_9$ & $[[36,7,\{16,5\}]]_9$\\
10 & $(0,6,2)$ & & $[[36,17,\{13,2\}]]_9$  & $[[36,15,\{13,3\}]]_9$  & $[[36,13,\{13,4\}]]_9$ & $[[36,10,\{13,5\}]]_9$\\
\noalign{\smallskip}
11 & $(0,6,4)$ & & $[[36,19,\{12,2\}]]_9$  & $[[36,17,\{12,3\}]]_9$  & $[[36,15,\{12,4\}]]_9$ & $[[36,12,\{12,5\}]]_9$\\
12 & $(0,6,5)$ & & $[[36,20,\{11,2\}]]_9$  & $[[36,18,\{11,3\}]]_9$  & $[[36,16,\{11,4\}]]_9$ & $[[36,13,\{11,5\}]]_9$\\
13 & $(0,7,3)$ & & $[[36,24,\{8,2\}]]_9$  & $[[36,22,\{8,3\}]]_9$ & $[[36,20,\{8,4\}]]_9$ & $[[36,17,\{8,5\}]]_9$\\
14 & $(0,7,5)$ & & $[[36,26,\{7,2\}]]_9$  & $[[36,24,\{7,3\}]]_9$ & $[[36,22,\{7,4\}]]_9$ & $[[36,19,\{7,5\}]]_9$\\
15 & $(0,8,4)$ & & $\pmb{[[36,32,\{3,2\}]]_9}$ & $\pmb{[[36,30,\{3,3\}]]_9}$  & $[[36,28,\{4,3\}]]_9$ & $[[36,25,\{5,3\}]]_9$\\
\noalign{\smallskip}
16 & $(1,2,1)$ & & $\pmb{[[37,2,\{32,2\}]]_9}$   &  & &\\ 
17 & $(1,3,2)$ & & $[[37,5,\{27,2\}]]_9$  & $[[37,3,\{27,3\}]]_9$ & $[[37,1,\{27,4\}]]_9$ &\\
18 & $(1,4,3)$ & & $[[37,9,\{22,2\}]]_9$  & $[[37,7,\{22,3\}]]_9$ & $[[37,5,\{22,4\}]]_9$ & $[[37,2,\{22,5\}]]_9$\\
19 & $(1,5,2)$ & & $[[37,12,\{18,2\}]]_9$  & $[[37,10,\{18,3\}]]_9$ & $[[37,8,\{18,4\}]]_9$ & $[[37,5,\{18,5\}]]_9$\\
20 & $(1,5,4)$ & & $[[37,14,\{17,2\}]]_9$  & $[[37,12,\{17,3\}]]_9$ & $[[37,10,\{17,4\}]]_9$ & $[[37,7,\{17,5\}]]_9$\\
\noalign{\smallskip}
21 & $(1,6,3)$ & & $[[37,18,\{13,2\}]]_9$  & $[[37,16,\{13,3\}]]_9$ & $[[37,14,\{13,4\}]]_9$ & $[[37,11,\{13,5\}]]_9$\\
22 & $(1,6,5)$ & & $[[37,20,\{12,2\}]]_9$  & $[[37,18,\{12,3\}]]_9$ & $[[37,16,\{12,4\}]]_9$ & $[[37,13,\{12,5\}]]_9$\\
23 & $(1,7,4)$ & & $[[37,25,\{8,2\}]]_9$  & $[[37,23,\{8,3\}]]_9$  & $[[37,21,\{8,4\}]]_9$ & $[[37,18,\{8,5\}]]_9$\\
24 & $(1,7,6)$ & & $[[37,27,\{7,2\}]]_9$  & $[[37,25,\{7,3\}]]_9$  & $[[37,23,\{7,4\}]]_9$ & $[[37,20,\{7,5\}]]_9$\\
25 & $(1,8,5)$ & & $\pmb{[[37,33,\{3,2\}]]_9}$  & $\pmb{[[37,31,\{3,3\}]]_9}$  & $[[37,29,\{4,3\}]]_9$ & $[[37,26,\{5,3\}]]_9$\\
\noalign{\smallskip}
26 & $(2,2,0)$ & & $\pmb{[[38,1,\{33,2\}]]_9}\star$  &  & &\\
27 & $(2,2,1)$ & & $[[38,2,\{32,2\}]]_9$  &   & &\\
28 & $(2,3,1)$ & & $[[38,4,\{28,2\}]]_9$  & $[[38,2,\{28,3\}]]_9$  & &\\
29 & $(2,3,2)$ & & $[[38,5,\{27,2\}]]_9$  & $[[38,3,\{27,3\}]]_9$  & $[[38,1,\{27,4\}]]_9$ &\\
30 & $(2,4,2)$ & & $[[38,8,\{23,2\}]]_9$  & $[[38,6,\{23,3\}]]_9$  & $[[38,4,\{23,4\}]]_9$ & $[[38,1,\{23,5\}]]_9$\\
\noalign{\smallskip}
31 & $(2,4,3)$ & & $[[38,9,\{22,2\}]]_9$  & $[[38,7,\{22,3\}]]_9$  & $[[38,5,\{22,4\}]]_9$ & $[[38,2,\{22,5\}]]_9$\\
32 & $(2,5,3)$ & & $[[38,13,\{18,2\}]]_9$ & $[[38,11,\{18,3\}]]_9$ & $[[38,9,\{18,4\}]]_9$ & $[[38,6,\{18,5\}]]_9$\\
33 & $(2,5,4)$ & & $[[38,14,\{17,2\}]]_9$  & $[[38,12,\{17,3\}]]_9$  & $[[38,10,\{17,4\}]]_9$ & $[[38,7,\{17,5\}]]_9$\\
34 & $(2,6,4)$ & & $[[38,19,\{13,2\}]]_9$  & $[[38,17,\{13,3\}]]_9$  & $[[38,15,\{13,4\}]]_9$ & $[[38,12,\{13,5\}]]_9$\\
35 & $(2,7,5)$ & & $[[38,26,\{8,2\}]]_9$  & $[[38,24,\{8,3\}]]_9$ & $[[38,22,\{8,4\}]]_9$ & $[[38,19,\{8,5\}]]_9$\\
\noalign{\smallskip}
36 & $(2,8,6)$ & & $\pmb{[[38,34,\{3,2\}]]_9}$  & $\pmb{[[38,32,\{3,3\}]]_9}$ & $[[38,30,\{4,3\}]]_9$ & $[[38,27,\{5,3\}]]_9$\\
37 & $(3,2,0)$ & & $\pmb{[[39,1,\{34,2\}]]_9}\star$  &  & &\\
38 & $(3,2,1)$ & & $\pmb{[[39,2,\{33,2\}]]_9}$  &  & &\\
39 & $(3,3,2)$ & & $[[39,5,\{28,2\}]]_9$  & $[[39,3,\{28,3\}]]_9$  & $[[39,1,\{28,4\}]]_9$ &\\
40 & $(3,4,3)$ & & $[[39,9,\{23,2\}]]_9$  & $[[39,7,\{23,3\}]]_9$  & $[[39,5,\{23,4\}]]_9$ & $[[39,2,\{23,5\}]]_9$\\
\noalign{\smallskip}
41 & $(3,5,4)$ & & $[[39,14,\{18,2\}]]_9$  & $[[39,12,\{18,3\}]]_9$  & $[[39,10,\{18,4\}]]_9$ & $[[39,7,\{18,5\}]]_9$\\
42 & $(3,6,3)$ & & $[[39,18,\{14,2\}]]_9$  & $[[39,16,\{14,3\}]]_9$  & $[[39,14,\{14,4\}]]_9$ & $[[39,11,\{14,5\}]]_9$\\
43 & $(3,6,5)$ & & $[[39,20,\{13,2\}]]_9$  & $[[39,18,\{13,3\}]]_9$  & $[[39,16,\{13,4\}]]_9$ & $[[39,13,\{13,5\}]]_9$\\
44 & $(3,7,6)$ & & $[[39,27,\{8,2\}]]_9$  & $[[39,25,\{8,3\}]]_9$  & $[[39,23,\{8,4\}]]_9$ & $[[39,20,\{8,5\}]]_9$\\
45 & $(3,8,7)$ & & $\pmb{[[39,35,\{3,2\}]]_9}$  & $\pmb{[[39,33,\{3,3\}]]_9}$  & $[[39,31,\{4,3\}]]_9$ & $[[39,28,\{5,3\}]]_9$\\
\noalign{\smallskip}
46 & $(4,2,0)$ & & $\pmb{[[40,1,\{35,2\}]]_9}\star$  &  & &\\
47 & $(4,2,1)$ & & $\pmb{[[40,2,\{34,2\}]]_9}$  &  & &\\
48 & $(4,6,4)$ & & $[[40,19,\{14,2\}]]_9$  & $[[40,17,\{14,3\}]]_9$  & $[[40,15,\{14,4\}]]_9$ & $[[40,12,\{14,5\}]]_9$\\
49 & $(4,6,5)$ & & $[[40,20,\{13,2\}]]_9$  & $[[40,18,\{13,3\}]]_9$  & $[[40,16,\{13,4\}]]_9$ & $[[40,13,\{13,5\}]]_9$\\
50 & $(4,7,3)$ & & $[[40,24,\{10,2\}]]_9$  & $[[40,22,\{10,3\}]]_9$  & $[[40,20,\{10,4\}]]_9$ & $[[40,17,\{10,5\}]]_9$\\
\noalign{\smallskip}
51 & $(5,2,0)$ & & $\pmb{[[41,1,\{36,2\}]]_9}\star$ &  & &\\
52 & $(5,2,1)$ & & $\pmb{[[41,2,\{35,2\}]]_9}$ &  & &\\
\noalign{\smallskip}
\hline
\end{tabular}
\end{table}
%%%
\begin{table}%[!hbt]
\renewcommand\thetable{5}
\caption{Good $9$-ary AQCs {\it(Continued)}}
\label{tab:9b}
\centering
\setlength{\tabcolsep}{4pt}
\begin{tabular}{c l c l l l l}
\hline
\noalign{\smallskip}
No. & $(t,m,\ell)$ & $(d_{z},\delta)$ & $Q$ from Th.~\ref{th:2} & $Q$ from Th.~\ref{th:3} & $Q$ from Th.~\ref{th:4} & $Q$ from Th.~\ref{th:5} \\
\noalign{\smallskip}
\hline
\noalign{\smallskip}
53 & $(5,3,2)$ & & $[[41,5,\{29,2\}]]_9$  & $[[41,3,\{29,3\}]]_9$ & $[[41,1,\{29,4\}]]_9$ &\\
54 & $(5,4,3)$ & & $[[41,9,\{24,2\}]]_9$  & $[[41,7,\{24,3\}]]_9$ & $[[41,5,\{24,4\}]]_9$ & $[[41,2,\{24,5\}]]_9$\\
55 & $(5,5,3)$ & $(20,19)$ & $[[41,13,\{20,2\}]]_9$ & $[[41,11,\{20,3\}]]_9$ & $[[41,9,\{20,4\}]]_9$ & $[[41,6,\{20,5\}]]_9$\\
56 & $(5,5,4)$ & & $[[41,14,\{19,2\}]]_9$  & $[[41,12,\{19,3\}]]_9$ & $[[41,10,\{19,4\}]]_9$ & $[[41,7,\{19,5\}]]_9$\\
57 & $(5,6,5)$ & & $[[41,20,\{14,2\}]]_9$  & $[[41,18,\{14,3\}]]_9$ & $[[41,16,\{14,4\}]]_9$ & $[[41,13,\{14,5\}]]_9$\\
\noalign{\smallskip}
58 & $(5,7,4)$ & & $[[41,25,\{10,2\}]]_9$  & $[[41,23,\{10,3\}]]_9$ & $[[41,21,\{10,4\}]]_9$ & $[[41,18,\{10,5\}]]_9$\\
59 & $(6,2,0)$ & & $\pmb{[[42,1,\{37,2\}]]_9}\star$   &  & &\\
60 & $(6,2,1)$ & & $\pmb{[[42,2,\{36,2\}]]_9}$   &  &&\\
61 & $(6,3,2)$ & & $[[42,5,\{30,2\}]]_9$  & $[[42,3,\{30,3\}]]_9$ & $[[42,1,\{30,4\}]]_9$ &\\
62 & $(6,5,4)$ & $(20,19)$ & $[[42,14,\{20,2\}]]_9$ & $[[42,12,\{20,3\}]]_9$ & $[[42,10,\{20,4\}]]_9$ & $[[42,7,\{20,5\}]]_9$\\
\noalign{\smallskip}
63 & $(6,7,3)$ & & $[[42,24,\{11,2\}]]_9$  & $[[42,22,\{11,3\}]]_9$ & $[[42,20,\{11,4\}]]_9$ & $[[42,17,\{11,5\}]]_9$\\
64 & $(6,7,5)$ & & $[[42,26,\{10,2\}]]_9$  & $[[42,24,\{10,3\}]]_9$ & $[[42,22,\{10,4\}]]_9$ & $[[42,19,\{10,5\}]]_9$\\
65 & $(7,2,0)$ & & $\pmb{[[43,1,\{38,2\}]]_9}\star$   &  & &\\
66 & $(7,2,1)$ & & $\pmb{[[43,2,\{37,2\}]]_9}$   &  & &\\
67 & $(7,3,1)$ & & $[[43,4,\{32,2\}]]_9$  & $[[43,2,\{32,3\}]]_9$  &&\\
\noalign{\smallskip}
68 & $(7,3,2)$ & & $[[43,5,\{31,2\}]]_9$  & $[[43,3,\{31,3\}]]_9$ & $[[43,1,\{31,4\}]]_9$ &\\
69 & $(7,7,3)$ & $(12,11)$ & $[[43,24,\{12,2\}]]_9$ & $[[43,22,\{12,3\}]]_9$ & $[[43,20,\{12,4\}]]_9$ & $[[43,17,\{12,5\}]]_9$\\
70 & $(7,7,4)$ & & $[[43,25,\{11,2\}]]_9$  & $[[43,23,\{11,3\}]]_9$ & $[[43,21,\{11,4\}]]_9$ & $[[43,18,\{11,5\}]]_9$\\
71 & $(7,7,6)$ & & $[[43,27,\{10,2\}]]_9$  & $[[43,25,\{10,3\}]]_9$ & $[[43,23,\{10,4\}]]_9$ & $[[43,20,\{10,5\}]]_9$\\
72 & $(8,2,0)$ & & $\pmb{[[44,1,\{39,2\}]]_9}\star$   &  & &\\
\noalign{\smallskip}
73 & $(8,2,1)$ & & $\pmb{[[44,2,\{38,2\}]]_9}$   &  & &\\
74 & $(8,3,1)$ & & $[[44,4,\{33,2\}]]_9$  & $[[44,2,\{33,3\}]]_9$ & &\\
75 & $(8,3,2)$ & & $[[44,5,\{32,2\}]]_9$  & $[[44,3,\{32,3\}]]_9$ & $[[44,1,\{32,4\}]]_9$ &\\
76 & $(8,4,2)$ & & $[[44,8,\{27,2\}]]_9$  & $[[44,6,\{27,3\}]]_9$ & $[[44,4,\{27,4\}]]_9$ & $[[44,1,\{27,5\}]]_9$\\
77 & $(8,4,3)$ & & $[[44,9,\{26,2\}]]_9$  & $[[44,7,\{26,3\}]]_9$ & $[[44,5,\{26,4\}]]_9$ & $[[44,2,\{26,5\}]]_9$\\
\noalign{\smallskip}
78 & $(8,7,1)$ & $(14,13)$ & $[[44,22,\{14,2\}]]_9$ & $[[44,20,\{14,3\}]]_9$ & $[[44,18,\{14,4\}]]_9$ & $[[44,15,\{14,5\}]]_9$\\
79 & $(8,7,5)$ & & $[[44,26,\{11,2\}]]_9$  & $[[44,24,\{11,3\}]]_9$ & $[[44,22,\{11,4\}]]_9$ & $[[44,19,\{11,5\}]]_9$\\
80 & $(9,2,0)$ & & $\pmb{[[45,1,\{40,2\}]]_9}\star$   &  & &\\
81 & $(9,2,1)$ & & $\pmb{[[45,2,\{39,2\}]]_9}$   &  &&\\
82 & $(9,3,1)$ & & $[[45,4,\{34,2\}]]_9$  & $[[45,2,\{34,3\}]]_9$ & &\\
\noalign{\smallskip}
83 & $(9,3,2)$ & & $[[45,5,\{33,2\}]]_9$  & $[[45,3,\{33,3\}]]_9$ & $[[45,1,\{33,4\}]]_9$ & \\
84 & $(9,4,2)$ & & $[[45,8,\{28,2\}]]_9$  & $[[45,6,\{28,3\}]]_9$ & $[[45,4,\{28,4\}]]_9$ & $[[45,1,\{28,5\}]]_9$\\
85 & $(9,4,3)$ & & $[[45,9,\{27,2\}]]_9$  & $[[45,7,\{27,3\}]]_9$ & $[[45,5,\{27,4\}]]_9$ & $[[45,2,\{27,5\}]]_9$\\
86 & $(9,7,0)$ & $(15,14)$ & $[[45,21,\{15,2\}]]_9$ & $[[45,19,\{15,3\}]]_9$ & $[[45,17,\{15,4\}]]_9$ & $[[45,14,\{15,5\}]]_9$\\
87 & $(9,7,6)$ & & $[[45,27,\{11,2\}]]_9$  & $[[45,25,\{11,3\}]]_9$ & $[[45,23,\{11,4\}]]_9$ & $[[45,20,\{11,5\}]]_9$\\
\noalign{\smallskip}
\hline
\end{tabular}
\end{table}

\section{Conclusion}
Nested XL codes provide good ingredients to derive pure $q$-ary CSS AQCs. In many cases the derived quantum codes can be 
shown to be optimal or best-known. For $d_{x}\geq 4$ the derived AQCs have very little, if at all, overlap with previously 
known ones. The comparison of parameters are done with respect to the list provided in~\cite{Gua13} and the more recent 
results in~\cite{EJLP12}.

In the cases where $d_{x} \in \{4,5\}$, subjecting the pair $C_{1}^\perp \subset C_{2}$ based on the XL codes treated in this paper to 
the so-called triangle bound in~\cite[Section V]{EJLP12} may certify that the derived good AQCs are in fact best possible or best known. 
More generally, for lengths beyond those covered by the XL codes, one can perhaps look at families of (nested) polynomial 
codes to come up with good AQCs.
\section*{Appendix A: Proof of Linear Independence of the Column Vectors in (\ref{M3})}
\begin{proof}
We consider all possible cases separately.

\noindent{Case I.} When $a^{q+1} = b^{q+1} = c^{q+1}$, Lemma~\ref{a0} says that $a^q+a, b^q+b, c^q+c$ 
are pairwise distinct. Hence, the matrix
\[\left(\begin{array}{ccc}
1&1&1\\
a^q+a	    &b^q+b       &c^q+c\\
(a^q+a)^2	&(b^q+b)^2   &(c^q+c)^2  
\end{array} \right)
\]
is a Vandermonde matrix, establishing linear independence for this case.

\noindent{Case II.} When $a^{q+1}= b^{q+1} \neq c^{q+1}$, using $b^q+b \neq a^q+a$ from Lemma~\ref{a0}, verify that
\begin{align*}
\left|\begin{array}{ccc}
1&1&1\\
a^{q+1}	    &b^{q+1}      &c^{q+1}\\
a^q+a	&b^q+b   &c^q+c
\end{array} \right|
&=
\left|\begin{array}{ccc}
1&1&1\\
0	    &0       &c^{q+1}-a^{q+1}\\
0  	    & (b^q+b)- (a^q+a)  & (c^q+c)- (a^q+a) 
\end{array} \right|\\
&= (c^{q+1}-a^{q+1})[(a^q+a)-(b^q+b)]\neq 0 \text{.}
\end{align*}
This case is thus settled.

\noindent{Case III.} The elements $a^{q+1}, b^{q+1}$, and $c^{q+1}$ are pairwise distinct. Then 
\[\left(\begin{array}{ccc}
1&1&1\\
a^{q+1}	    &b^{q+1}      &c^{q+1}\\
a^{2(q+1)}	&b^{2(q+1)}      &c^{2(q+1)} 
\end{array} \right)
\]
is a Vandermonde matrix. Hence, the columns in (\ref{M3}) are linearly independent. 
The proof is now complete.
\end{proof}

\section*{Appendix B: Proof of Linear Independence of the Column Vectors in (\ref{M4})}
\begin{proof} There are four cases to consider.
	
\noindent{Case I.} When $ a^{q+1} = b^{q+1} = c^{q+1} = d^{q+1}$, Lemma~\ref{a0} says that 
$a^q+a, b^q+b, c^q+c$ and $d^q+d$ are pairwise distinct. The following Vandermonde matrix certifies linear independence
\[\left(\begin{array}{cccc}
1&1&1&1\\
a^q+a	    & b^q+b       & c^q+c     & d^q+d\\
(a^q+a)^2	& (b^q+b)^2   & (c^q+c)^2 & (d^q+d)^2\\
(a^q+a)^3	& (b^q+b)^3   & (c^q+c)^3 & (d^q+d)^3
\end{array} \right)\text{.}
\]

\noindent{Case II.} Let there be two distinct elements among $a^{q+1}, b^{q+1}, c^{q+1}$, and $d^{q+1}$.
Lemma~\ref{a0} says that if $ a^{q+1} = b^{q+1} = c^{q+1} \neq d^{q+1}$, then $ a^q+a, b^q+b$, and $c^q+c$ 
are pairwise distinct. Let $D_{\M_{1}}$ be the determinant of the matrix
\[
\M_{1} = 
\left(\begin{array}{cccc}
1&1&1&1\\ 
a^q+a	    & b^q+b       & c^q+c      & d^{q}+d    \\
(a^q+a)^2	& (b^q+b)^2   & (c^q+c)^2  & (d^{q}+d)^{2} \\
a^{q+1}     & b^{q+1}     & c^{q+1}    & d^{q+1} \\
\end{array} 
\right)\text{.}
\]
Denote by $\cN$ the Vandermonde matrix with nonzero determinant $D_{\cN}$ derived by deleting the last 
column and the last row of $\M_{1}$. Subtracting $c^{q+1}$ times the first row from the last row of $\M_{1}$ 
reveals that
\[
D_{\M_{1}}=(d^{q+1}-c^{q+1}) D_{\cN} \neq 0 \text{.}
\]

When $a^{q+1} = b^{q+1} \neq c^{q+1} =d^{q+1}$, then, by Lemma \ref{a0}, 
$a^q+a	\neq b^q+b$ and $c^q+c \neq d^q+d$. For brevity, let $\lambda_{y}=y^{2q+1}+y^{q+2}$ 
for $y \in \{a,b,c,d\}$. Let $D_{\M_{2}}$ be the determinant of
\[
\M_{2} = 
\left(\begin{array}{cccc}
1&1&1&1\\ 
a^{q+1}	         & b^{q+1}          & c^{q+1}          & d^{q+1}    \\
a^{q}+a          & b^{q}+b          & c^{q}+c          & d^{q}+d    \\
\lambda_{a}      & \lambda_{b}      & \lambda_{c}      & \lambda_{d} \\
\end{array} 
\right)\text{.}
\]
First, subtract $c^{q+1}$ times the first row from the second row of $\M_{2}$. Then subtract $d^{q+1}$ times 
the first row from the third row of the resulting matrix. Subtract $d^{q+1}$ times the third row from the fourth row of 
this last matrix to get a matrix which we call $\mathcal{P}$. Using the cofactor expansion along 
the fourth column of $\mathcal{P}$ tells us that
\begin{equation*}
D_{\M_{2}}=[(c^{q}+c)-(d^{q}+d)][(b^{q}+b)-(a^{q}+a)]\cdot(a^{q+1}-c^{q+1})(a^{q+1}-d^{q+1}) \neq 0 \text{.}
\end{equation*}

\noindent{Case III.} Without lost of generality, let us assume that $a^{q+1}, b^{q+1}$, 
and $c^{q+1}$ are pairwise distinct and $c^{q+1}= d^{q+1}$. 
Let $D_{\M_{3}}$ be the determinant of
\[
\M_{3} = 
\left(
\begin{array}{cccc}
1&1&1&1\\ 
a^{q+1}	    & b^{q+1}       & c^{q+1}      & d^{q+1}    \\
a^{2(q+1)}	& b^{2(q+1)}    & c^{2(q+1)}   & d^{2(q+1)} \\
a^{q}+a     & b^{q}+b       & c^{q}+c      & d^{q}+d    \\
\end{array} 
\right)\text{.}
\]
Also, let $\cN'$ be the Vandermonde matrix with nonzero determinant $D_{\cN'}$ derived by deleting the last 
column and the last row of $\M_{3}$. Subtracting the third column from the fourth column of $\M_{3}$ and 
using the cofactor expansion along the fourth row of the resulting matrix shows that
\[
D_{\M_{3}}=[(d^{q}+d)-(c^{q}+c)] D_{\cN'} \neq 0 \text{.}
\]
We conclude that, in this case, the columns in (\ref{M4}) are linearly independent.

\noindent{Case IV.} In the case where $a^{q+1}, b^{q+1}, c^{q+1}$ and $d^{q+1}$ are pairwise distinct, 
we have a Vandermonde matrix
\[
\left(\begin{array}{cccc}
	1&1&1&1\\ 
	a^q+a	    &b^q+b       &c^q+c    &d^q+d\\
	(a^q+a)^2	&(b^q+b)^2   &(c^q+c)^2&(d^q+d)^2\\
	(a^q+a)^3	&(b^q+b)^3   &(c^q+c)^3&(d^q+d)^3 
\end{array}\right)\text{.}
\]
The columns in (\ref{M4}) are, therefore, linearly independent.
\end{proof}
\begin{acknowledgements}
We thank Markus Grassl and Dimitrii Pasechnik for some computer algebra pointers, 
and the anonymous referees for their comments and suggestions.
\end{acknowledgements}

% BibTeX users please use one of
%\bibliographystyle{spbasic}      % basic style, author-year citations
%\bibliographystyle{spmpsci}      % mathematics and physical sciences
%\bibliographystyle{spphys}       % APS-like style for physics
%\bibliography{}   % name your BibTeX data base

\end{document}